\patchcmd{\thebibliography}{\section*{\refname}}{}{}{}
\renewcommand{\algorithmicrequire}{\textbf{Input:}}
\renewcommand{\algorithmicensure}{\textbf{Output:}}
\newcommand{\ice}{{\sc backShift }}
\newcommand{\ling}{{\sc Ling }}
\newcommand{\lingam}{{\sc Lingam }}
\newcommand{\ffdiag}{{\sc FFDiag }}
\newcommand{\ices}{{\sc backShift}}
\newcommand{\lings}{{\sc Ling}}
\newcommand{\iMult}{{m_I}}
\newcommand{\iid}{i.i.d.\ }
\newcommand{\x}{{\mathbf x}}
\newcommand{\X}{{\mathbf X}}
\newcommand{\ci}{{\mathbf c}}
\newcommand{\B}{{\mathbf B}}
\newcommand{\D}{{\mathbf D}}
\newcommand{\J}{{\mathcal J}}
\newcommand{\N}{{\mathcal N}}
\newcommand{\Se}{{\mathbf {\Sigma_e}}}
\newcommand{\Scj}{{\mathbf {\Sigma}_{\mb c,j}}}
\newcommand{\Scjm}{{\mathbf {\Sigma}_{\mb c,-j}}}
\newcommand{\CP}{{\mathit{CP}}}
\newcommand{\DScj}{{\mathbf {\Delta \Sigma}_{\mb c,j}}}
\newcommand{\DScjp}{{\mathbf {\Delta \Sigma}_{\mb c,j'}}}
\newcommand{\DSxj}{{\mathbf {\Delta \Sigma}_{\mb x,j}}}
\newcommand{\sic}{\boldsymbol \eta}
\newcommand{\Ge}{{\mathbf {G_e}}}
\newcommand{\Gcj}{{\mathbf {G}_{\mb c,j}}}
\newcommand{\Si}{{\mathbf {\Sigma}}}
\newcommand{\DSi}{{\mathbf {\Delta \Sigma}}}
\newcommand{\G}{{\mathbf {G}}}
\newcommand{\mb}{\mathbf}
\newcommand{\cov}{\mathrm{Cov}}
\newtheorem{theorem}{Theorem}
\newtheorem{cor}[theorem]{Corollary}
\newdimen\arrowsize
\title{\ices: Learning causal cyclic graphs from unknown shift interventions}
\author{
Dominik Rothenh\"ausler$^{*}$\\
Seminar f\"ur Statistik\\
ETH Z\"urich, Switzerland\\
\texttt{rothenhaeusler@stat.math.ethz.ch} \\ 
\And
Christina Heinze\thanks{Authors contributed equally.}\\
Seminar f\"ur Statistik\\
ETH Z\"urich, Switzerland\\
\texttt{heinze@stat.math.ethz.ch} \\
\And
Jonas Peters\\
Max Planck Institute for Intelligent Systems\\
T\"ubingen, Germany \\
\texttt{jonas.peters@tuebingen.mpg.de} \\
\And
Nicolai Meinshausen\\
Seminar f\"ur Statistik\\
ETH Z\"urich, Switzerland\\
\texttt{meinshausen@stat.math.ethz.ch} 
}
\begin{document}

\maketitle

\begin{abstract}
We propose a simple method to learn linear causal cyclic models
 in the presence of latent
variables. The method relies on equilibrium data of the model recorded under a specific kind of interventions (``shift interventions''). The  location  and strength of these interventions do not have to be
known and can be estimated from the data.  Our method, called \ices, only uses second
moments of the data and performs simple joint matrix
diagonalization, applied to differences between covariance matrices. 
We give a sufficient and necessary condition
for identifiability of the system, which is fulfilled almost surely
under some quite general assumptions 
 if and only if there are at least  three distinct experimental
settings, one of which can be pure observational data.
We demonstrate the performance on some simulated data and applications
in flow cytometry and financial time series. 
\end{abstract}

\vspace{-0.1cm}
\section{Introduction}
\vspace{-0.1cm}

Discovering causal effects is a fundamentally important yet very challenging task  in various disciplines, from public health research and sociological studies, economics to many applications in the life sciences. There has been much progress on learning acyclic graphs in the context of structural equation models \cite{Bollen1989}, including  methods that learn from observational data alone under a faithfulness assumption  \cite{Spirtes2000,Chickering2002,Maathuis2009,Hauser2012}, exploiting non-Gaussianity of the data \cite{Hoyer2008,Shimizu2011} or non-linearities \cite{Mooij2011}.
Feedbacks are prevalent in most applications, and we are interested in the setting of  \cite{Hyttinen2012}, where we observe the equilibrium data of a model that is characterized by a set of linear relations
\begin{equation}\label{eq:model}  \x = \B \x + \mb e, \end{equation}
where $\x\in \mathbb{R}^p$ is a random vector and 
$\B \in \mathbb{R}^{p\times p}$ is the connectivity matrix with zeros on the diagonal (no self-loops). Allowing for self-loops would lead to an identifiability problem, independent of the method. See Section~\ref{sec:supp_algo} in the Appendix for more details on this setting.
The graph corresponding to $\B$ has $p$ nodes and an edge from node $j$ to node $i$ 
if and only if $\B_{i,j} \neq 0$. 
The error terms $\mb e$ are $p$-dimensional random variables with mean 0 and positive semi-definite covariance matrix $\Se=E(\mb e \mb e^T)$. We do not assume that $\Se$ is a diagonal matrix which allows the existence of latent variables.

The solutions to \eqref{eq:model} can be thought of as the
deterministic equilibrium solutions (conditional on the noise term)
of a  dynamic model governed by first-order difference equations with
matrix $\B$ in the sense of \cite{lauritzen2002chain}. For well-defined equilibrium solutions of~\eqref{eq:model}, we need that $\mb I-\mb B$ is invertible. Usually we also want 
\eqref{eq:model} to converge to an equilibrium when iterating as $\x^{\mathit (new)}
\leftarrow \B \x^{\mathit (old)} + \mb
e$ or in other words $\lim_{m\to\infty} \B^m \equiv \mb 0$. This condition is
equivalent to the spectral radius of $\B$ being
strictly smaller than one \cite{lacerda2012discovering}.
We will make an assumption on cyclic graphs that restricts
the strength of the feedback.  
 Specifically,  let a
cycle of length $\eta$ be given by $(m_{1}, \ldots,
m_{\eta+1}=m_1) \in \{1,\ldots,p\}^{1+\eta}$ and $m_{k} \neq m_{\ell}$ for $1 \le k < \ell \le \eta$.
We define the cycle-product
$\CP(\B)$ of a matrix $\B$ to be the maximum over cycles of all
lengths $1<\eta\le p$ of the path-products 
\begin{equation}\label{CP}
\CP(\B):= \max_{ \substack{(m_1,\ldots,m_\eta, m_{\eta+1}) \text{ cycle}\\[1pt]1<\eta\le p  }} \; \,\prod_{1\le k \le \eta} \left| \mb B_{m_{k+1},m_{k}} \right|.
\end{equation}

The cycle-product $\CP(\B)$ is clearly zero for acyclic graphs. We will assume the
cycle-product to be strictly smaller than one for identifiability
results, see Assumption~(A) below. 
The most interesting graphs are those for which $\CP(\B)<1$  and for
which the spectral radius of  $\B$  is
strictly smaller than one. Note that these two conditions are identical
as long as the cycles in the graph do not intersect, i.e., there is no node that is part of two cycles (for example if
there is  at most  one cycle in the graph). If cycles do intersect, we
can have models for which either (i) $\CP(\B)<1$ but the spectral radius
is larger than one or (ii) $\CP(\B)>1$ but the spectral radius is
strictly smaller than one. Models in situation (ii) are not stable  in the sense that the iterations will not
converge under interventions. 
We can for example block all but one cycle.
If this one single unblocked cycle has a cycle-product larger than 1 (and there is such a cycle in
the graph if $\CP(\B)>1$), then the solutions of the iteration are not
stable\footnote{The blocking of all but one cycle can be achieved by do-interventions on appropriate variables under the following condition: for every pair of cycles in the graph, the variables in one cycle cannot be a subset of the variables in the other cycle.  Otherwise the blocking could be achieved by deletion of appropriate edges.}.
Models in
situation (i) are
not stable either, even in the absence of interventions. We can still in theory obtain the now instable equilibrium
solutions to~\eqref{eq:model} as $(\mb I - \B)^{-1} \mb e$ and the theory below applies to these instable equilibrium
solutions. However, such instable equilibrium solutions are arguably
of little practical interest. In summary: all interesting feedback
models that are stable under interventions satisfy both
$\CP(\B)<1$ and have a spectral radius strictly smaller than
one. We will just  assume $\CP(\B)<1$ for the following results.

It is impossible to learn the structure $\B$ of this model from observational data alone without making further assumptions.  The \lingam approach has been extended in \cite{lacerda2012discovering}  to cyclic models, exploiting a possible non-Gaussianity of the data.
Using both experimental and interventional data, \cite{scheines2010combining,Hyttinen2012} could show identifiability of the connectivity  matrix $\B$ under a learning mechanism that relies on data under so-called ``surgical'' or ``perfect'' interventions
. In their framework, a variable becomes independent of all its parents if it is being intervened on and all incoming contributions are thus effectively removed under the intervention (also called do-interventions in the classical sense of \cite{Pearl2009}). The learning mechanism makes then use of the knowledge where these ``surgical'' interventions occurred. 
\cite{Eberhardt2010} 
also allow for ``changing'' the incoming arrows for
variables that are intervened on; but again, \cite{Eberhardt2010} requires the location of the interventions 
while 
we do not assume such knowledge. \cite{peters2015causal} consider a target variable and allow for arbitrary interventions on all other nodes. They neither permit hidden variables nor cycles.

Here, we are interested in a setting where we have either no or just very limited knowledge about the exact location and strength of the interventions, as is often the case for data observed under different environments (see the example on financial time series further below) or for biological data \cite{Jackson2003, kulkarni2006evidence}. These interventions have been called ``fat-hand'' or ``uncertain'' interventions \cite{eaton2007exact}. While \cite{eaton2007exact} assume acyclicity and model the structure explicitly in a Bayesian setting,  we assume that the data in environment $j$ are equilibrium observations of the model
\begin{equation}\label{eq:modelinterv}  
\x_j = \B \x_j + \mb c_j + \mb e_j, 
\end{equation}
where the random intervention shift $\ci_j$ has a mean and covariance $\mb \Sigma_{\mb c,j}$. 
The \emph{location} of these interventions (or simply the \emph{intervened variables}) are those 
components of $\mb c_j$ that are not zero with probability one.
Given these locations, the interventions simply shift the variables by a value determined by $\mb c_j$;
they are therefore not ``surgical'' 
but can be seen as a special case of what is called an ``imperfect'', ``parametric'' \cite{Eberhardt2007} or ``dependent'' intervention \cite{Korb2004} or
``mechanism change'' \cite{Tian2001}. The 
matrix $\B$ and the 
error distribution of $\mb e_j$ are assumed
to be identical in all environments. 
In contrast to the covariance matrix for the noise term $\mb e_j$, we \emph{do} assume that 
$\mb \Sigma_{\mb c,j}$
is a diagonal matrix, which is equivalent to demanding that interventions at different variables are uncorrelated. This is a key assumption necessary to identify the model using experimental data. 
Furthermore, we will discuss in Section~\ref{sec:sachs} how a
violation of the model assumption~\eqref{eq:modelinterv} can be
detected and used to estimate the location of the interventions.

In Section~\ref{sec:method} we show how to leverage
observations under different environments
with different interventional distributions to learn the structure of
the connectivity matrix $\B$ in model \eqref{eq:modelinterv}. The
method rests on a simple joint matrix diagonalization. We will prove
necessary and sufficient conditions for identifiability 
in Section~\ref{sec:identifiability}. Numerical results for simulated
data and applications in flow cytometry and financial data are shown in Section~\ref{sec:numerical}.

\vspace{-0.1cm}
\section{Method}\label{sec:method}
\vspace{-0.1cm}
\subsection{Grouping of data}
\vspace{-0.1cm}
Let $\J$ be the set of experimental conditions under which we observe
equilibrium data from model~\eqref{eq:modelinterv}. These different
experimental conditions can arise in two ways: (a) a controlled
experiment was conducted where the external input or the external imperfect
interventions have been deliberately changed from one member of $\J$
to the next. An example are the flow cytometry data
\cite{sachs2005causal} discussed in Section~\ref{sec:sachs}. (b) The data are recorded over
time. It is assumed that the external input is changing over time
but not in an explicitly controlled way. The data are grouped into 
consecutive blocks $j \in \J$ of observations, see Section~\ref{sec:finance} for an example. 

\vspace{-0.1cm}
\subsection{Notation}
\vspace{-0.1cm}
Assume we have $n_j$ observations in each setting $j\in \J$.
Let $\X_j$ be the $(n_j\times p)$-matrix of observations from
model~\eqref{eq:modelinterv}. 
For general random variables $\mb a_j \in
\mathbb{R}^p$
, the population covariance matrix in setting $j\in \J$ is called 
$\Si_{\mb a,j} = \cov( \mb a_j )$, 
where the covariance is under the setting $j\in \J$.  Furthermore, the covariance on all settings except
setting $j\in\J$ is defined as an average over all environments except
for the $j$-th environment, 
$ (|\J|-1) \Scjm :=\sum_{j'\in \J\setminus\{j\}} \Si_{\mb c, j'} .$ The
population Gram matrix is defined as 
$\G_{\mb a,j} = E( \mb a_j {\mb a_j}^T)$.
Let the $(p\times p)$-dimensional $\hat{\Si}_{\mb a,j}$ be the empirical
covariance matrix of the observations $\mb A_j \in \mathbb{R}^{n_j\times
  p}$ of variable $\mb a_j$ in setting $j\in \J$. More precisely, let
$\tilde{\mb A}_j$ be the column-wise mean-centered version of $\mb A_j$.
Then  $\hat{\Si}_{\mb a,j}:=  (n_j-1)^{-1}\tilde{\mb A}_j^T \tilde{\mb
  A}_j$. 
The
empirical Gram matrix is denoted by $\hat{\G}_{\mb a,j}:= n_j^{-1}{\mb A}_j^T
{\mb A}_j$.
\vspace{-0.1cm}
\subsection{Assumptions}
\vspace{-0.1cm}
The main assumptions have been stated already but we give a summary
below. 
\begin{enumerate}[(A)]
\item The data are observations of the equilibrium observations of
  model~\eqref{eq:modelinterv}. The matrix $\mb I-\B$ is invertible
  and the solutions to~\eqref{eq:modelinterv} are thus well defined. The cycle-product~\eqref{CP} $\CP(\B)$ is
  strictly smaller than one. The diagonal entries of $\B$ are zero.
\item The distribution of the noise $\mb e_j$ (which includes the influence
  of latent variables) and the connectivity matrix $\B$ are identical across all settings $j \in \J$. In each setting $j \in \J$, the intervention shift $\mb c_j$ and the noise $\mb e_j$ are uncorrelated.
\item Interventions at different variables
  in the same setting are uncorrelated, that is $\Scj $ is an
  (unknown) diagonal matrix for all $j\in \J$.
\end{enumerate}
We will discuss a stricter version of (C) in
Section~\ref{sec:Gram} in the Appendix that allows the use of Gram matrices instead of covariance matrices. The conditions above imply that the
environments are characterized by different interventions strength, as
measured by the variance of the shift $\mb c$ in each setting. We aim
to reconstruct both the connectivity matrix $\B$ from observations in
different environments and also aim to reconstruct the a-priori unknown intervention
strength and location in each environment. Additionally, we will show examples where we can detect violations of the model assumptions and use these to reconstruct the location of interventions. 
\vspace{-0.1cm}
\subsection{Population method}
\vspace{-0.1cm}
The main idea is very simple. Looking at the
model~\eqref{eq:modelinterv}, we can rewrite 
\begin{equation}\label{eq:modelinterv2}  
(\mb I - \B) \x_j =  \mb c_j + \mb e_j. 
\end{equation}
The population covariance  of the transformed
observations are then for all settings
$j\in \J$ given by 
\begin{align}
(\mb I - \B) \Si_{\x,j} (\mb I - \B)^T &= \Scj + \Se . \label{eq:Si} \end{align}
The last term $\Se$ is constant across all settings $j\in \J$ (but not necessarily diagonal as we allow hidden variables). Any
change of the matrix on the left-hand side thus stems from a shift in the
covariance matrix $\Scj$ of the interventions. Let us define the difference
between the covariance of~$\mb c$ and $\mb x$ in setting $j$ 
as
\begin{align}\label{eq:Delta}
\DScj :=  \Scj - \Scjm,\;\;\mbox{    and     }\;\;
\DSi_{\x,j} :=  \Si_{\x,j} -  \Si_{\x,-j}.
\end{align}
Assumption (B) together with~\eqref{eq:Si} implies that
\begin{equation}\label{eq:diff}
(\mb I - \B) \DSi_{\x,j} (\mb I - \B)^T = \DScj \qquad \forall j\in \J.
\end{equation}
Using assumption (C), the random intervention shifts at different variables are
uncorrelated and the right-hand side in~\eqref{eq:diff} is thus a
diagonal matrix for all $j\in \J$. 
Let $\mathcal{D}\subset \mathbb{R}^{p\times p}$ be the set of all invertible
matrices. We also define a more restricted
space $\mathcal{D}_{cp}$ which only includes those members of $\mathcal{D}$ that have entries
all equal to one on the diagonal and have a
cycle-product less than one,
\begin{align} \label{eq:D}
\mathcal{D} & := \Big\{ \mb D \in \mathbb{R}^{p\times p}: \D \mbox{
  invertible} \Big\}\\
\mathcal{D}_{cp} & := \Big\{ \mb D \in \mathbb{R}^{p\times p}: \D \in
\mathcal{D}   \mbox{ and }   \mbox{diag}(\mb D) \equiv 1 \mbox{ and }  \CP(\mb
I-\D)<1    \Big\}. \label{eq:Dcp}
\end{align}
Under Assumption (A), 
$ \mb I - \B \in  \mathcal{D}_{cp} $.
Motivated by~\eqref{eq:diff}, we now consider the minimizer
\begin{equation} \label{eq:Dhatpop} 
\D = \mbox{argmin}_{\mb D' \in
    \mathcal{D}_{cp} } \sum_{j\in \J} L(\D ' \DSi_{\x,j} \mb D'^T ),\quad \mbox{ where } L(\mb A) := \sum_{k\neq l} \mb A_{k,l}^2
\end{equation}
is the loss $L$ for any matrix $\mb A$ and defined as the sum of the squared off-diagonal elements.
In Section~\ref{sec:identifiability}, we present necessary and sufficient conditions
on the interventions under which $\D = \mb I-\mb B$ is the unique minimizer of~\eqref{eq:Dhatpop}. 
In this case, exact
joint 
diagonalization is possible so that $L(\D \DSi_{\x,j} \mb D^T
)=0$ for all environments $j\in \J$. We discuss an alternative that replaces covariance with Gram matrices throughout in Section~\ref{sec:Gram} in the Appendix. 
We now give a finite-sample version.

\vspace{-0.1cm}
\subsection{Finite-sample estimate of the connectivity matrix}
\vspace{-0.1cm}

\begin{wrapfigure}[9]{!t}{0.55\textwidth}
\vspace{-.85cm}
\begin{minipage}{0.5\textwidth}
  \begin{algorithm}[H]
\caption{\ice \label{alg:hiddenice}}
	\algorithmicrequire\;  $\X_j \; \forall j \in \J$ 
  \begin{algorithmic}[1]
  \STATE Compute $\widehat \DSi_{\x,j} \;  \forall j \in \J$
  \STATE $\tilde{\D} = $ { \sc FFDiag}$(\widehat\DSi_{\x,j}) $ 
  \STATE $\hat{\D} = \texttt{PermuteAndScale}(\tilde{\D})$ 
  \STATE$ \hat{\B} = \mb I - \hat{\D}$
  \end{algorithmic}
  \algorithmicensure\; $\hat{\B}$
\end{algorithm} 
\end{minipage}
\vspace{-0.5cm}
\end{wrapfigure}


In practice, we estimate $\B$ by minimizing the empirical counterpart of~\eqref{eq:Dhatpop} in two steps. First, the solution of the optimization is only constrained to matrices in $\mathcal{D}$. Subsequently, we enforce the constraint on the solution to be a member of $\mathcal{D}_{cp}$. The \ice algorithm is presented in Algorithm \ref{alg:hiddenice} and we describe the important steps in more detail below.

\paragraph{Steps 1 \& 2.}
First, we minimize the following empirical, less constrained variant of~\eqref{eq:Dhatpop} 
\begin{align}\label{eq:hatD}  
\tilde{\D} &:= \mbox{argmin}_{\D'\in\mathcal{D} } \sum_{j\in \J} \; L(\D'(\widehat \DSi_{\x,j}) \mb D'^T ),
\end{align}
where the population differences between covariance matrices are replaced with their empirical counterparts and the only constraint on the solution is that it is invertible, i.e.  $\tilde{\D} \in \mathcal{D}$. For the optimization we use the joint approximate matrix diagonalization algorithm \ffdiag \cite{Ziehe04afast}. 

\paragraph{Step 3.}
The constraint on the cycle product and the diagonal elements of $\D$ is enforced by (a) permuting and (b) scaling the rows of $\tilde{\D}$. Part (b) simply scales the rows so that the diagonal elements of the resulting matrix $\hat{\D}$ are all equal to one. The more challenging first step (a) consists of finding a permutation such that 
under this permutation 
the scaled matrix from part (b) will have a cycle product as small as possible (as follows from Theorem~\ref{theorem:algo}, at most one permutation can lead to a cycle product less than one). This optimization problem seems computationally challenging at first, but we show that it can be  solved by a variant of the \emph{linear assignment problem} (LAP) (see e.g. \cite{lasp2013}), as proven in Theorem~\ref{theorem:algo} in the Appendix. As a last step, we check whether the cycle product of $\hat{\D}$ is less than one, in which case we have found the solution. Otherwise, no solution satisfying the model assumptions exists and we return a warning that the model assumptions are not met. See Appendix~\ref{sec:supp_algo} for more details.

\paragraph{Computational cost.} 
The computational complexity of \ice is  $O(\vert \J \vert \cdot n \cdot p^2)$ as
computing the covariance matrices costs $O(\vert \J \vert \cdot n \cdot p^2)$, \ffdiag has a computational cost of $O(\vert \J \vert \cdot p^2)$ and both the linear assignment problem and computing the cycle product can be solved in $O(p^3)$ time. For instance, this complexity is achieved when using the Hungarian algorithm for the linear assignment problem (see e.g. \cite{lasp2013}) and the cycle product can be computed with a simple dynamic programming approach.


\vspace{-0.1cm}
\subsection{Estimating the intervention variances}\label{sec:Int_vars}
\vspace{-0.1cm}
One additional benefit of \ice is that the location and strength of the interventions can be estimated from the data. The empirical, plug-in version of Eq.~\eqref{eq:diff} is given by
\begin{equation}\label{eq:diff_emp}
(\mb I - \hat{\B}) \widehat{\DSi}_{\x,j} (\mb I - \hat{\B})^T = \widehat{\DSi}_{\mb c,j}  = \widehat{\Si}_{\mb c,j} - \widehat{\Si}_{\mb c,-j}\qquad \forall j\in \J.
\end{equation}
So the element $( \widehat{\DSi}_{\mb c,j} )_{kk}$ is an estimate for the difference between the variance of the intervention at variable $k$ in environment $j$, namely $( \Scj )_{kk}$, and the average in all other environments, $( \Scjm )_{kk}$. From these differences we can compute the intervention variance for all environments up to an offset. By convention, we set the minimal intervention variance across all environments equal to zero. Alternatively, one can let observational data, if available, serve as a baseline against which the intervention variances are measured. 

\vspace{-0.1cm}
\section{Identifiability}\label{sec:identifiability}
\vspace{-0.1cm}
Let for simplicity of notation,
\[ \sic_{j,k} := ( \DScj )_{kk} \]
be the variance of the random intervention shifts $\mb c_j$
at node $k$ in environment $j\in \J$ as per the definition of $\DScj$ in \eqref{eq:Delta}. We then have the following identifiability result (the proof is provided in Appendix~\ref{sec:supp_ident}).
\begin{theorem}\label{theorem:1}
	Under assumptions (A), (B) and (C), the solution to \eqref{eq:Dhatpop} is unique 
	if and only if for all $k,l \in \left\{ 1,\dots,p \right\}$ there exist $j,j' \in \J$ such that
\begin{equation}\label{eq:manyunique} \sic_{j,k} \sic_{j',l} \neq \sic_{j,l}  \sic_{j',k}\,. \end{equation}
\end{theorem}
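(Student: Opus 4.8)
The plan is to reduce the statement to a uniqueness question about joint congruence-diagonalization and then treat the two implications separately. First I would record the elementary reformulation. For each variable $k$ collect its intervention-variance profile $\mb s_k := (\sic_{j,k})_{j\in\J}\in\mathbb{R}^{|\J|}$. Since $\sic_{j,k}\sic_{j',l}-\sic_{j,l}\sic_{j',k}$ is the $2\times2$ minor of the matrix with columns $\mb s_k,\mb s_l$ formed by rows $j,j'$, condition~\eqref{eq:manyunique} holds for a pair $(k,l)$ (the requirement being non-trivial only when $k\neq l$) exactly when $\mb s_k$ and $\mb s_l$ are linearly independent; hence \eqref{eq:manyunique} is equivalent to pairwise linear independence of $\mb s_1,\dots,\mb s_p$. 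By~\eqref{eq:diff}, $\D_0:=\mb I-\B\in\mathcal{D}_{cp}$ attains $L=0$ in~\eqref{eq:Dhatpop}, so uniqueness means that $\D_0$ is the only element of $\mathcal{D}_{cp}$ with $L(\mb D'\,\DSi_{\x,j}\,\mb D'^T)=0$ for all $j$. Writing $\mb A:=\mb D'\D_0^{-1}$ and using~\eqref{eq:diff}, this amounts to $\mb A\,\mb\Lambda_j\,\mb A^T$ being diagonal for every $j$, where $\mb\Lambda_j:=\mathrm{diag}(\sic_{j,1},\dots,\sic_{j,p})$. Thus the theorem reduces to: when is every invertible $\mb A$ that jointly congruence-diagonalizes all $\mb\Lambda_j$ necessarily monomial (one nonzero entry per row and column), and does $\mathcal{D}_{cp}$ then single out $\mb A=\mb I$.

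For sufficiency, assuming pairwise independence, I would pass from the $|\J|$ matrices to two generic combinations $\mb\Lambda_\ast=\sum_j\alpha_j\mb\Lambda_j$ and $\mb\Lambda_{\ast\ast}=\sum_j\beta_j\mb\Lambda_j$, which $\mb A$ also diagonalizes by congruence, say to $\mb D_\ast:=\mb A\mb\Lambda_\ast\mb A^T$ and $\mb D_{\ast\ast}:=\mb A\mb\Lambda_{\ast\ast}\mb A^T$. For generic $(\alpha,\beta)$ every diagonal entry $\alpha\cdot\mb s_m$ is nonzero and the ratios $r_m:=(\beta\cdot\mb s_m)/(\alpha\cdot\mb s_m)$ are pairwise distinct; the only obstruction, $\det\begin{pmatrix}\alpha\cdot\mb s_m & \alpha\cdot\mb s_{m'}\\ \beta\cdot\mb s_m & \beta\cdot\mb s_{m'}\end{pmatrix}=0$, cuts out a proper subvariety precisely because each $\mb s_m\neq 0$ and each pair $\mb s_m,\mb s_{m'}$ is independent. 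Then $\mb D_\ast^{-1}\mb D_{\ast\ast}=\mb A^{-T}\bigl(\mb\Lambda_\ast^{-1}\mb\Lambda_{\ast\ast}\bigr)\mb A^T$ exhibits a diagonal matrix with distinct diagonal entries $r_m$ as being similar, via $\mb A^{-T}$, to another diagonal matrix; distinct eigenvalues force $\mb A^{-T}$, and hence $\mb A$, to be monomial. Finally, membership of $\mb D'=\mb A\D_0$ in $\mathcal{D}_{cp}$ removes the remaining freedom: rescaling to unit diagonal together with Theorem~\ref{theorem:algo} (at most one permutation yields cycle-product below one, and the identity already does) forces $\mb A=\mb I$, i.e. $\mb D'=\D_0$.

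For necessity, if~\eqref{eq:manyunique} fails I would fix $k\neq l$ with $\mb s_k,\mb s_l$ dependent, so $(\sic_{j,k},\sic_{j,l})=t_j(a,b)$ for a common $(t_j)_j$ and scalars $a,b$. I then take $\mb A$ equal to the identity outside the $\{k,l\}$ block and a $2\times2$ block $\mb R$ there, chosen so that $\mb R\,\mathrm{diag}(a,b)\,\mb R^T$ is diagonal; such non-monomial $\mb R$ exist arbitrarily close to the identity (an orthogonal rotation when $a=b$, a shear when one of $a,b$ vanishes, and the one-parameter family $R_{11}=R_{22}=1,\ R_{12}=\delta,\ R_{21}=-b\delta/a$ in general). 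Then $\mb D':=\mb A\D_0$ jointly diagonalizes all $\DSi_{\x,j}$ and, after rescaling rows to unit diagonal, is a second feasible point with $L=0$, distinct from $\D_0$ because $\mb A$ is not diagonal.

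I expect the main obstacle to be feasibility in $\mathcal{D}_{cp}$, that is the cycle-product constraint. In the necessity construction the alternative diagonalizer must still satisfy $\CP(\mb I-\mb D')<1$ after the unit-diagonal rescaling; since $\CP(\B)<1$ strictly and the cycle-product, the rescaling, and all matrix entries depend continuously on $\mb R$, choosing $\mb R$ close enough to the identity keeps the rescaled $\mb D'$ inside $\mathcal{D}_{cp}$ while remaining $\neq\D_0$. Symmetrically, the step in the sufficiency argument most in need of care is the passage from ``$\mb A$ monomial'' to ``$\mb A=\mb I$,'' which is exactly where Theorem~\ref{theorem:algo} is invoked to discard the permutation ambiguity. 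The genericity of $(\alpha,\beta)$ that reduces the many matrices to the two-matrix distinct-eigenvalue argument, and the $\mathcal{D}_{cp}$-feasibility of the perturbation, are the two places I would treat most carefully.
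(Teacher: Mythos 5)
Your proof is correct, and its two halves relate to the paper's proof differently. Your \emph{necessity} construction is essentially the paper's own: the $2\times2$ block with $R_{12}=\delta$, $R_{21}=-b\delta/a$ is precisely the paper's perturbation $\D_{k \bullet}' := \D_{k \bullet} + \delta \D_{l \bullet}$, $\D_{l \bullet}' := \D_{l \bullet} - \delta \lambda \D_{k \bullet}$ with $\lambda=b/a$; your degenerate cases (shear, rotation) correspond to its fallback $\lambda:=0$; and both arguments finish by rescaling to unit diagonal and using continuity of the cycle product to keep the perturbed matrix in $\mathcal{D}_{cp}$. Your \emph{sufficiency} argument is genuinely different. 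The paper works row by row: with $\mb g_m=\D^{-T}\D_{m\bullet}'$ (the rows of your transfer matrix $\mb A$), it derives $\mb g_{m'}\perp\DScj\mb g_m$ for $m'\neq m$, concludes that the vectors $\DScj\mb g_m$, $j\in\J$, are pairwise collinear since they all lie in the one-dimensional orthogonal complement of $\mathrm{span}\{\mb g_{m'}:m'\neq m\}$, and then applies \eqref{eq:manyunique} to each pair $(k,l)$ to kill all but one entry of $\mb g_m$. You instead collapse the family $\{\DScj\}_{j\in\J}$ into two generic linear combinations and run the classical distinct-generalized-eigenvalue argument, so that similarity of diagonal matrices with pairwise distinct ratios forces $\mb A$ to be monomial. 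Your genericity step is sound: the determinant obstruction for a pair $(m,m')$, evaluated at coefficient vectors equal to standard basis vectors indexed by $j$ and $j'$, is exactly the quantity in \eqref{eq:manyunique}, so pairwise independence (which also forces every profile $\mb s_m\neq 0$) makes each bad set a proper subvariety. This route buys a clean reduction to a textbook two-matrix uniqueness statement; the paper's pairwise argument buys freedom from any genericity selection and handles vanishing intervention variances without it. Both proofs coincide again in the last step, where cycle products strictly below one for both $\mb I-\D$ and $\mb I-\D'$ exclude any nontrivial permutation. One small caution there: Theorem~\ref{theorem:algo} as stated assumes all entries of $\D$ are nonzero, which need not hold for a sparse $\B$, so you should not cite it verbatim but repeat its cycle argument directly (as the paper's proof of Theorem~\ref{theorem:1} does), noting that the entries $\D_{\sigma(m),m}$ needed for the unit-diagonal rescaling are automatically nonzero whenever the permutation $\sigma$ is feasible; the contradiction between the two reciprocal cycle products then goes through unchanged. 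That is a presentational fix, not a gap.
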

If none of the intervention variances $ \sic_{j,k} $ vanishes, the uniqueness condition is equivalent to demanding that the ratio between the intervention variances for two variables $k,l$ must not stay identical across all environments, that is there exist $j,j'\in \J$ such that
\begin{equation}\label{eq:manyuniquenon0} \frac{\sic_{j,k} }{\sic_{j,l} } \neq \frac{\sic_{j',k}}{\sic_{j',l}}  ,  \end{equation}
which requires that the ratio of the  variance of the intervention shifts at two nodes $k,l$ is not identical across all settings.
This leads to the following corollary.
\begin{cor} 
\begin{enumerate}[(i)]
\item The identifiability condition~\eqref{eq:manyunique} cannot be satisfied if $|\J|=2$ 
since then $\sic_{j,k} = - \sic_{j',k}$ for all $k$ and $j \neq j'$. 
We need at least three different environments for identifiability.
\item The identifiability condition~\eqref{eq:manyunique} is satisfied for all $|\J|\ge 3$ almost surely if the variances of the intervention $\mb c_j$ are chosen independently 
(over all variables and environments $j \in \J$) 
from a distribution that is absolutely continuous with respect to Lebesgue measure.
\end{enumerate}

\end{cor}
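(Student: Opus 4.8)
The plan is to handle the two parts separately, each reducing to an elementary linear-algebra fact about the reparametrized intervention variances. Throughout write $m := |\J|$ and let $v_{j,k} := (\Scj)_{kk}$ be the (raw) intervention variance at node $k$ in environment $j$; these are the quantities assumed in part (ii) to be drawn independently from an absolutely continuous law. Unwinding the definitions in \eqref{eq:Delta} gives
\[
\sic_{j,k} = (\Scj)_{kk} - (\Scjm)_{kk} = \frac{1}{m-1}\Big( m\, v_{j,k} - \sum_{j'\in\J} v_{j',k} \Big),
\]
so that, for each fixed $k$, the vector $(\sic_{j,k})_{j\in\J}\in\mathbb{R}^m$ always lies in the hyperplane $H := \{a\in\mathbb{R}^m : \sum_j a_j = 0\}$, and the linear map $(v_{j,k})_j \mapsto (\sic_{j,k})_j$ has rank $m-1$ with image exactly $H$. (Since the diagonal case $k=l$ makes both sides of \eqref{eq:manyunique} coincide, I read the condition for distinct $k\neq l$, consistent with the off-diagonal loss in \eqref{eq:Dhatpop}.)

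For part (i), take $\J=\{j,j'\}$. Then $\Scjm = \Si_{\mb c,j'}$ and the analogous average for $j'$ is $\Si_{\mb c,j}$, so $\DScjp = \Si_{\mb c,j'} - \Si_{\mb c,j} = -\DScj$, and taking $kk$-entries yields $\sic_{j',k} = -\sic_{j,k}$ for every $k$. Substituting into the two sides of \eqref{eq:manyunique} gives $\sic_{j,k}\sic_{j',l} = -\sic_{j,k}\sic_{j,l} = \sic_{j,l}\sic_{j',k}$ for all $k,l$ and for the (up to order unique) pair of environments, so the strict inequality can never hold and uniqueness fails; hence at least three environments are required.

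For part (ii), first recast the condition: for a fixed pair $k\neq l$, the quantity $\sic_{j,k}\sic_{j',l} - \sic_{j,l}\sic_{j',k}$ is exactly the $2\times2$ minor on columns $j,j'$ of the matrix with rows $(\sic_{j,k})_j$ and $(\sic_{j,l})_j$, so \eqref{eq:manyunique} holds for this pair if and only if those two row vectors are linearly independent in $\mathbb{R}^m$. I would then exhibit the failure set as a proper algebraic variety. The minors are polynomials in the independent variables $(v_{j,k})_j$ and $(v_{j,l})_j$, and the failure event (all minors vanish) is their common zero set. Because $v_{\cdot,k}$ and $v_{\cdot,l}$ can be chosen freely and independently, and each surjects onto the $(m-1)$-dimensional space $H$, and because $m-1\ge 2$ when $m\ge 3$, one can pick two linearly independent vectors in $H$; thus at least one minor is a not-identically-zero polynomial, whose zero set — and hence the failure set — has Lebesgue measure zero. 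Taking the union over the finitely many pairs $k\neq l$ preserves measure zero, so under any absolutely continuous law \eqref{eq:manyunique} holds for all pairs simultaneously with probability one.

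The only genuinely delicate point is the dimension count in part (ii): since the vectors $(\sic_{\cdot,k})$ are confined to the hyperplane $H$ rather than ranging over all of $\mathbb{R}^m$, two such vectors can be linearly independent only when $\dim H = m-1 \ge 2$. This is precisely why the argument collapses at $m=2$ — there $H$ is one-dimensional and any two of its vectors are proportional — recovering part (i); and it is the single inequality $m\ge 3$ that drives the generic-independence conclusion.
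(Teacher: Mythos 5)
Your proof is correct. For part (i), your sign-flip derivation ($\DScjp = -\DScj$ when $|\J|=2$, hence $\sic_{j',k}=-\sic_{j,k}$) is exactly the argument the paper embeds inline in the corollary statement itself, just spelled out more carefully. For part (ii), however, the paper offers no proof at all --- the claim is simply asserted --- so your measure-zero argument is a genuine completion rather than a retracing, and it is the right one. Your route also has a unifying feature the paper lacks: you recast condition~\eqref{eq:manyunique} for a pair $k \neq l$ as linear independence of the two vectors $(\sic_{j,k})_{j\in\J}$ and $(\sic_{j,l})_{j\in\J}$, which you show always lie in the hyperplane $H=\{a\in\mathbb{R}^m:\sum_j a_j=0\}$ because the map $(v_{j,k})_j \mapsto (\sic_{j,k})_j$ is a positive multiple of the centering projection, with image exactly $H$. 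This makes (i) and (ii) two faces of one dimension count: for $m=2$ we have $\dim H=1$, so the two vectors are forced to be collinear and every $2\times 2$ minor vanishes identically; for $m\ge 3$ we have $\dim H = m-1 \ge 2$, so by surjectivity onto $H$ some minor is a not-identically-zero polynomial in the raw variances, its zero set is Lebesgue-null, the failure event for the pair $(k,l)$ is contained in that zero set, and the finite union over pairs still has probability zero under the absolutely continuous product law. Two minor points in your write-up are worth keeping: the observation that \eqref{eq:manyunique} must be read for $k\neq l$ (for $k=l$ it is vacuously false) flags a small imprecision in the paper's own statement of Theorem~\ref{theorem:1}, and your use of containment in the zero set of a \emph{single} non-trivial minor (rather than analyzing the full variety of all vanishing minors) is exactly the economy that makes the argument clean.
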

Condition (ii) can be relaxed but shows that we can already  achieve full identifiability with a very generic setting for three (or more) different environments.

\vspace{-0.1cm}
\section{Numerical results}\label{sec:numerical}
\vspace{-0.1cm}
In this section, we present empirical results for both synthetic and real data sets. In addition to estimating the connectivity matrix $\B$, we demonstrate various ways to estimate properties of the interventions.
Besides computing the point estimate for \ices, we use \emph{stability selection} \cite{meinshausen2010stability} to assess the stability of retrieved edges.
 We attach R-code  with which all simulations and analyses can be reproduced\footnote{An R-package called ``\texttt{backShift}'' is available from CRAN.}.
\subsection{Synthetic data}\label{subsec:simulated}
We compare the point estimate of \ice against \ling \cite{lacerda2012discovering}, a generalization of \lingam to the cyclic case for purely observational data. We consider the cyclic graph shown in Figure~\ref{fig:sim_true_metrics}\subref{fig:sim_net} and generate data under different scenarios. The data generating mechanism is sketched in Figure~\ref{fig:sim_true_metrics}\subref{fig:sim_gen}.
Specifically, we generate ten distinct environments with non-Gaussian noise. 
In each environment, the random intervention variable is generated as  $(\ci_j)_k =\beta_k^j I_k^j$, 
where $\beta_1^j, \ldots, \beta_p^j$ are drawn \iid from $ \text{Exp}(\iMult)$ and  $I_1^j, \ldots, I_p^j$ are independent standard normal random variables. 
The intervention shift thus acts on all observed random variables.
The parameter $\iMult$ regulates the strength of the intervention.
If hidden variables exist, the noise term $(\mb e_j)_k$ of variable $k$ in environment $j$ is equal to $ \gamma_k W^j$, where the weights $\gamma_1, \ldots, \gamma_p$ are sampled once from a  $\N(0,1)$-distribution and the random variable $W^j$ has a $ \text{Laplace}(0, 1)$ distribution.
If no hidden variables are present, then $(\mb e_j)_k$, $k=1,\ldots,p$  is sampled \iid $\text{Laplace}(0, 1)$. 
In this set of experiments, we consider five different settings (described below) in which the sample size $n$, the intervention strength $\iMult$ as well as the existence of hidden variables varies. 

\begin{figure*}[!tp]
\begin{centering}
\vspace{-.5cm}
\hspace{-.55cm}
\subfloat[\xspace]{
    \includegraphics[trim=85 80 50 70, clip, width=0.23\textwidth, keepaspectratio=true]{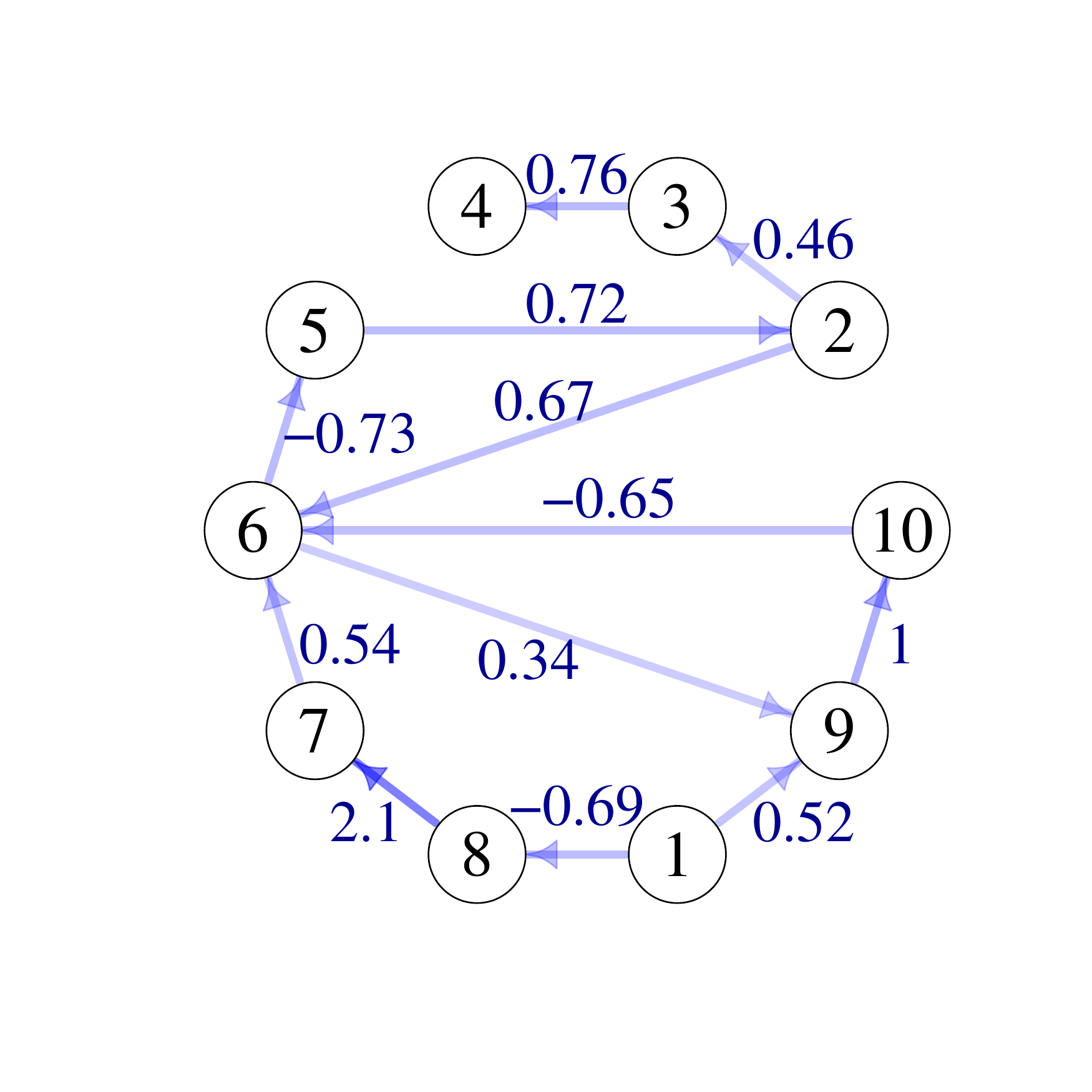}
    \label{fig:sim_net}
}
\hspace{.1cm}
\subfloat[\xspace]{
\begin{tikzpicture}[scale=.57, line width=0.5pt, minimum size=0.45cm, inner sep=0.3mm, shorten >=1pt, shorten <=1pt]
    \tikzstyle{every node}=[font=\tiny]
 
   	\draw (-1.5,0) node(x1) [circle, draw] {$X_1$};
    \draw (1.5,0) node(x2) [circle, draw] {$X_2$};
    \draw (0,2.5) node(x3) [circle, draw] {$X_3$};
   
   	\draw (-2.5,-1) node(i1) [circle, draw, blue] {$I_1$};
    \draw (2.5,-1) node(i2) [circle, draw, blue] {$I_2$};
    \draw (0.75,3.75) node(i3) [circle, draw, blue] {$I_3$};
    
	\draw (-2.5,1) node(w1) [circle, dashed, draw, magenta] {$E_1$}; 
	\draw (2.5,1) node(w2) [circle, dashed, draw, magenta] {$E_2$};  
	\draw (-.75,3.75) node(w3) [circle, dashed, draw, magenta] {$E_3$};    
	
	 \draw (0,1) node(w) [circle, dashed, draw, magenta] {$W$};   
    
    \draw (-2.4,-0.3) node(ii) [blue] {$\beta_1$};
    \draw (2.4,-0.3) node(ii) [blue] {$\beta_2$};
	\draw (.8,3) node(iii) [blue] {$\beta_3$};     
	
	\draw (-.7,.775) node(ii) [magenta] {$\gamma_1$};
	\draw (.7,.775) node(ii) [magenta] {$\gamma_2$};
	 \draw (-.3,1.5) node(ii) [magenta] {$\gamma_3$};
     
    \draw[-arcsq] (x1) -- (x2);
    \draw[-arcsq] (x2) -- (x3);
    \draw[-arcsq] (x3) -- (x1);
    \draw[-arcsq, blue] (i2) -- (x2);
    \draw[-arcsq, blue] (i3) -- (x3);
    \draw[-arcsq, blue] (i1) -- (x1);
    \draw[-arcsq, magenta] (w1) -- (x1);
    \draw[-arcsq, magenta] (w2) -- (x2);
    \draw[-arcsq, magenta] (w3) -- (x3);
	\draw[-arcsq, magenta] (w) -- (x1);
    \draw[-arcsq, magenta] (w) -- (x2);
    \draw[-arcsq, magenta] (w) -- (x3);    
    
   \end{tikzpicture}  
  
    \label{fig:sim_gen}
    }
\hspace{.2cm}
\subfloat[\xspace]{
  \includegraphics[trim=0 55 10 50, clip, width=0.35\textwidth, keepaspectratio=true]{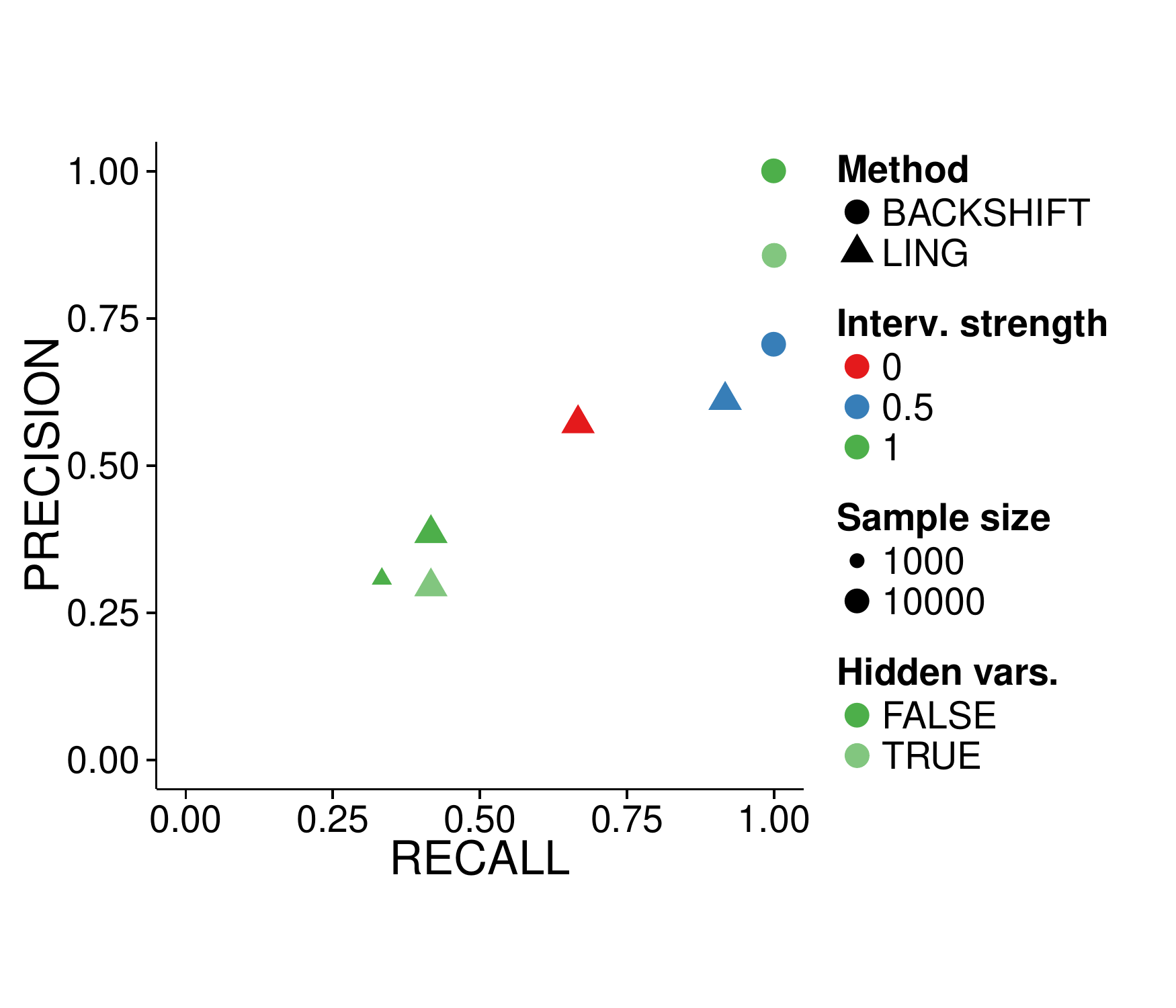}
    \label{fig:sim_metrics}

}
\vspace{-5pt}
\caption{\small Simulated data. \protect\subref{fig:sim_net} True network. 
\protect\subref{fig:sim_gen} Scheme for data generation. 
\protect\subref{fig:sim_metrics} Performance metrics for the settings considered in Section \ref{subsec:simulated}. For \ices, precision and recall values for Settings~1 and~2 coincide.} \label{fig:sim_true_metrics}
\end{centering}
\vspace{-0.2cm}
\end{figure*}
\begin{table*}[!tp]
\hspace{-0.2cm}
  \begin{tabular}
      {llllll} 
	& 
	 \hspace{1mm}     
     {\small \emph{Setting 1}} & 
     \hspace{-2mm}
     {\small \emph{Setting 2}} & 
     \hspace{-2mm}
     {\small \emph{Setting 3}} & 
     \hspace{-2mm}
     {\small \emph{Setting 4}} & 
     \hspace{-2mm}
     {\small \emph{Setting 5}} \\      
      
     & 
	 \hspace{1mm}     
     {\small $n = 1000$} & 
     \hspace{-2mm}
     {\small $n = 10000$} & 
     \hspace{-2mm}
     {\small $n = 10000$} & 
     \hspace{-2mm}
     {\small $n = 10000$} & 
     \hspace{-2mm}
     {\small $n = 10000$} \\

     & 
	 \hspace{1mm}
     {\small no hidden vars.} & 
     \hspace{-2mm}
     {\small no hidden vars.} & 
     \hspace{-2mm}
     {\small hidden vars.} & 
     \hspace{-2mm}
     {\small no hidden vars.} & 
     \hspace{-2mm}
     {\small no hidden vars.} \\
 
 	& 
 	\hspace{1mm}
 	{\small $\iMult = 1$} & 
 	 \hspace{-2mm}
     {\small $\iMult = 1$} & 
     \hspace{-2mm}
     {\small $\iMult = 1$} & 
     \hspace{-2mm}
     {\small $\iMult = 0$} & 
     \hspace{-2mm}
     {\small $\iMult = 0.5$} 
     \vspace{-0.65cm}\\

\rotatebox{90}{\ice} & 
\hspace{-3mm}
\subfloat{
  \raisebox{-.1\height}{\includegraphics[trim=85 85 50 02, clip, width=0.17\textwidth, keepaspectratio=true]{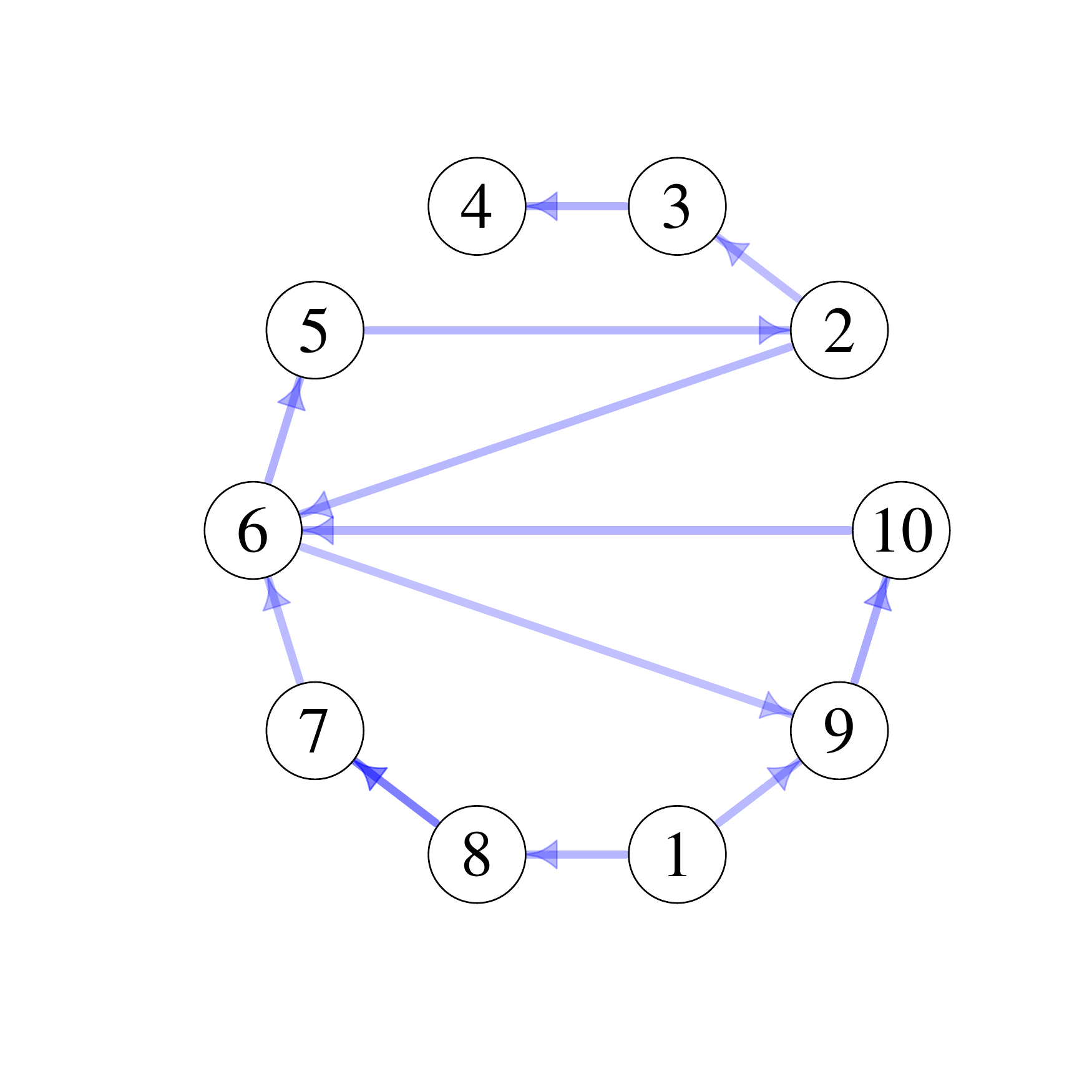}
}} & \hspace{-6mm}
\subfloat{
     \raisebox{-.1\height}{\includegraphics[trim=85 85 50 70, clip, width=0.17\textwidth, keepaspectratio=true]{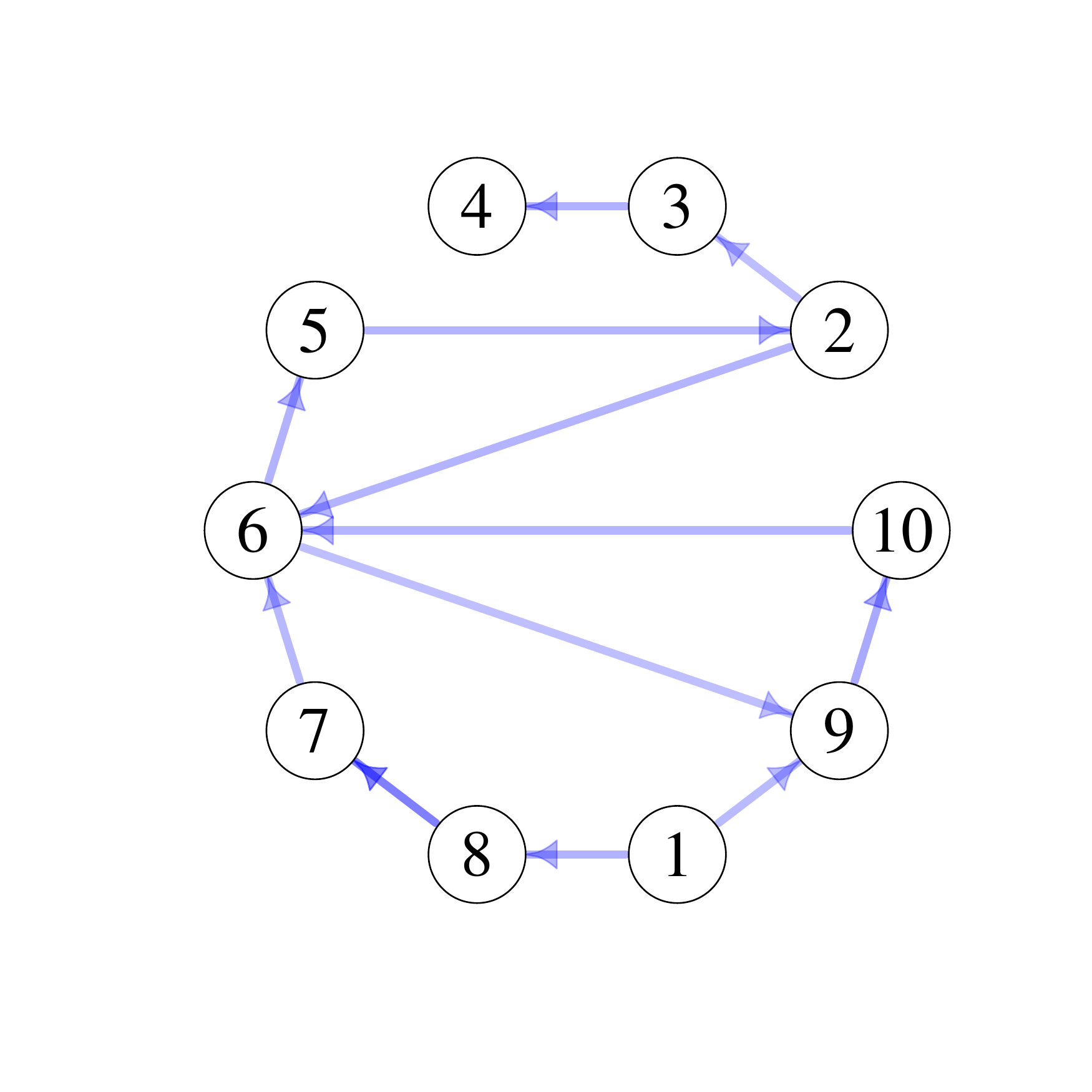}
}} & \hspace{-6mm}
\subfloat{
     \raisebox{-.1\height}{\includegraphics[trim=85 85 50 70, clip, width=0.17\textwidth, keepaspectratio=true]{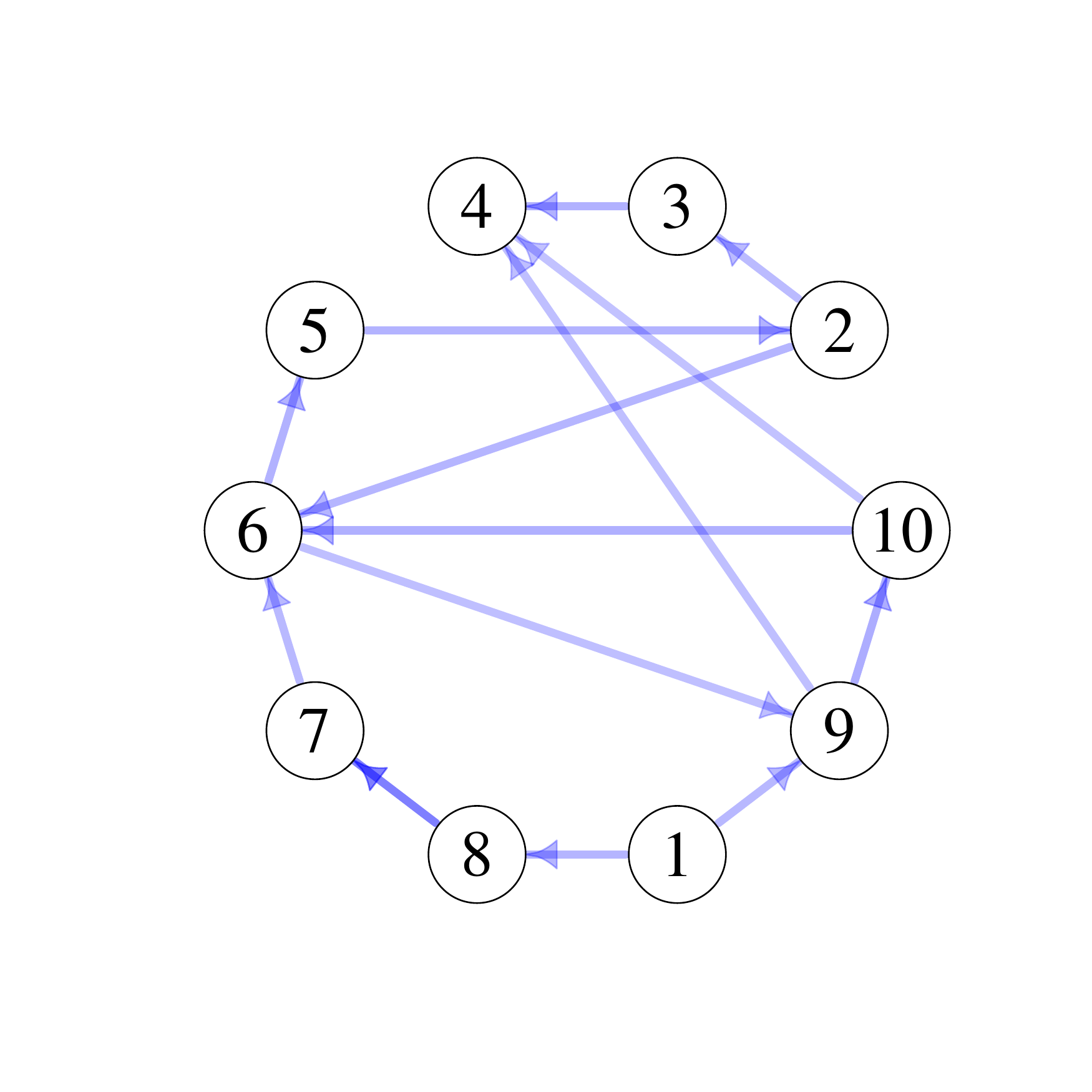}
}} & \hspace{-6mm}
\subfloat{
     \raisebox{-.1\height}{\includegraphics[trim=85 85 50 70, clip, width=0.17\textwidth, keepaspectratio=true]{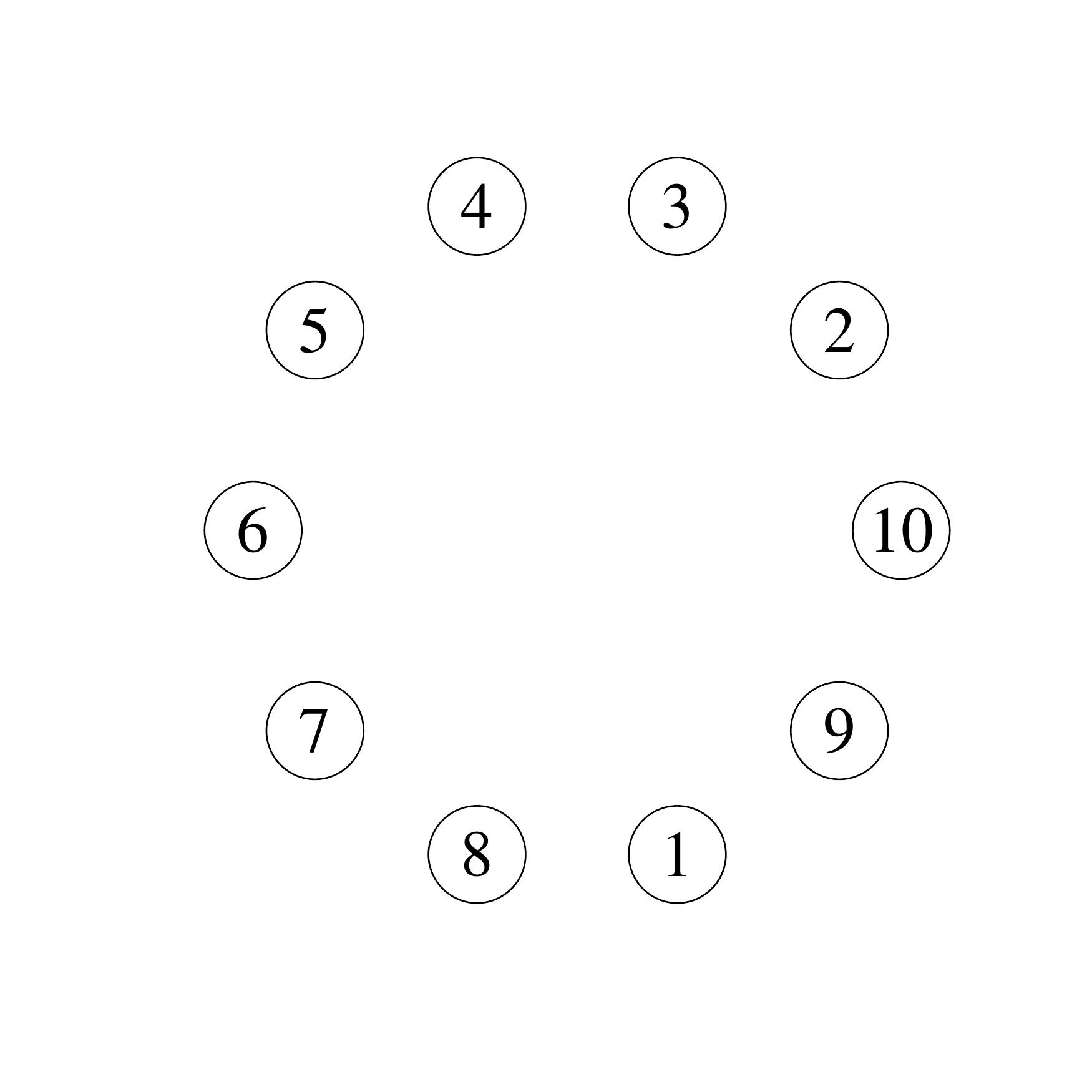}
}} & \hspace{-6mm}
\subfloat{
    \raisebox{-.1\height}{\includegraphics[trim=85 85 50 70, clip, width=0.17\textwidth, keepaspectratio=true]{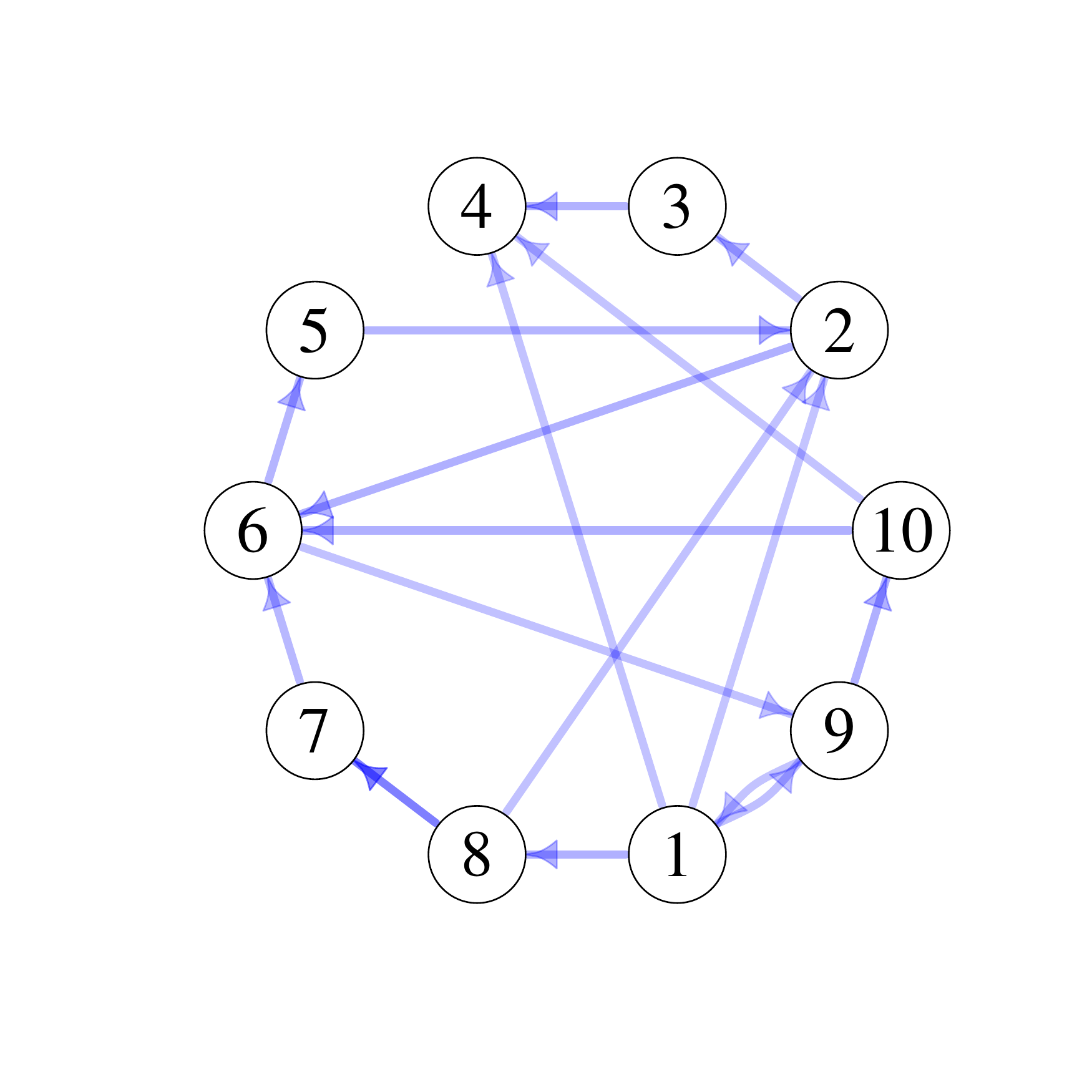}
}}
 \\
  & \hspace{-2mm}  \tiny ${\bf \mathbf{SHD}= 0}, \vert t \vert = 0.25$ &\hspace{-2mm}  \tiny${\bf \mathbf{SHD} = 0}, \vert t \vert = 0.25$&\hspace{-2mm}  \tiny${\bf \mathbf{SHD} = 2}, \vert t \vert = 0.25$ &\hspace{-2mm}  \tiny${\bf \mathbf{SHD} = 12}$&\hspace{-2mm}  \tiny${\bf \mathbf{SHD} = 5}, \vert t \vert = 0.25$
  \vspace{-0.15cm}\\
\rotatebox{90}{\ling} &
\hspace{-3mm}
\subfloat{
     \raisebox{-.4\height}{\includegraphics[trim=85 85 50 70, clip, width=0.17\textwidth, keepaspectratio=true]{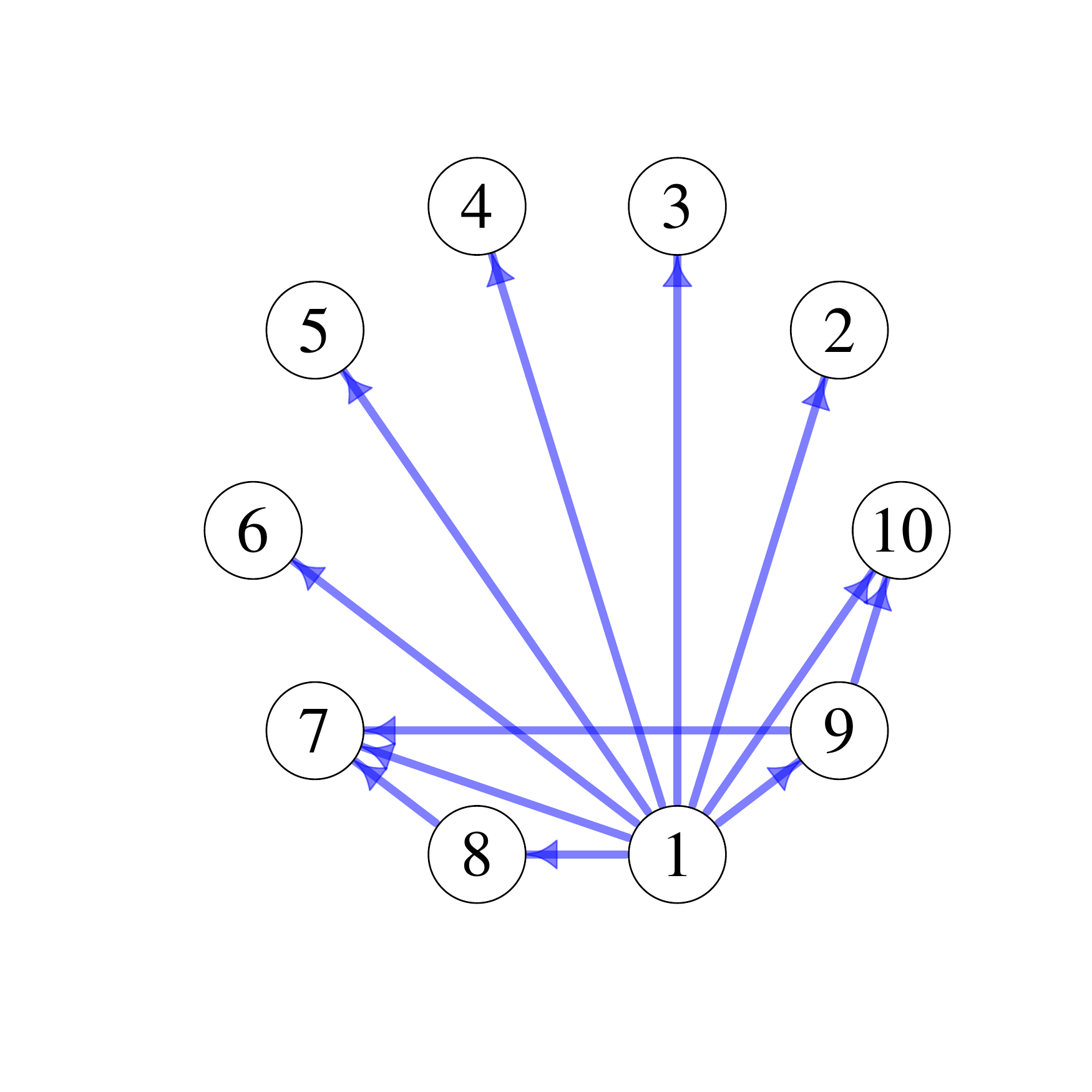}
}} & \hspace{-6mm}
\subfloat{
    \raisebox{-.4\height}{\includegraphics[trim=85 85 50 70, clip, width=0.17\textwidth, keepaspectratio=true]{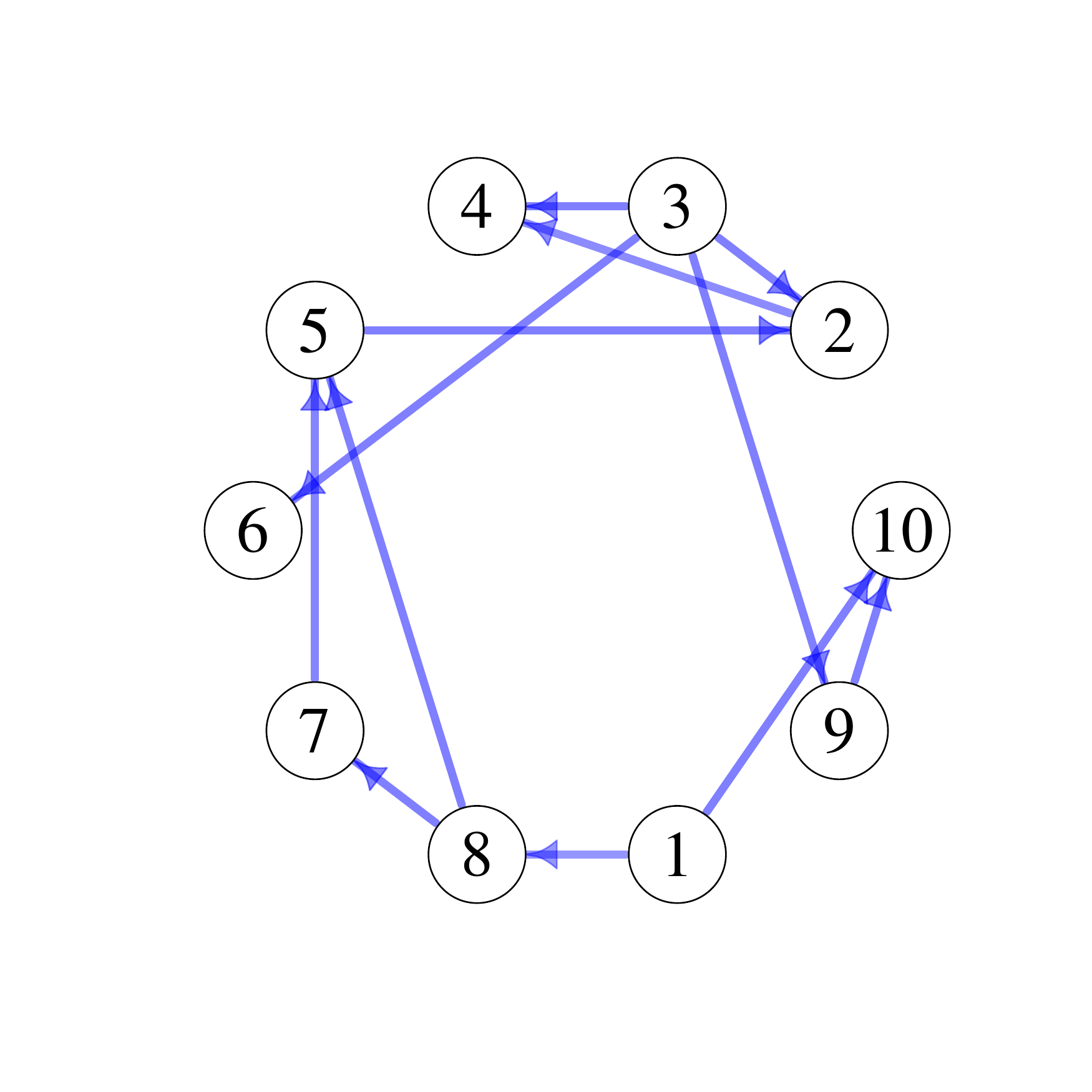}
}} & \hspace{-6mm}
\subfloat{
    \raisebox{-.4\height}{\includegraphics[trim=85 85 50 70, clip, width=0.17\textwidth, keepaspectratio=true]{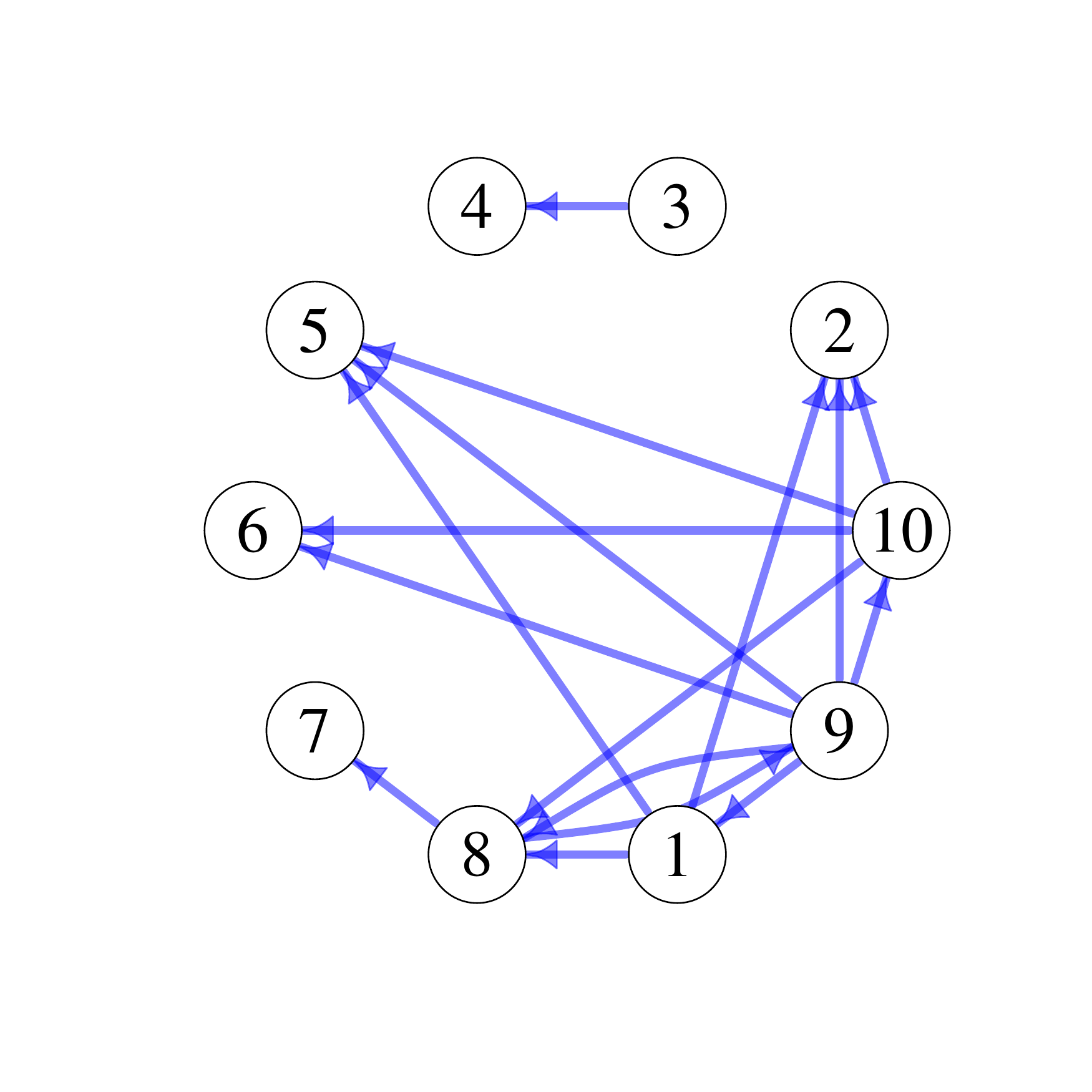}
}} & \hspace{-6mm}
\subfloat{
    \raisebox{-.4\height}{\includegraphics[trim=85 85 50 70, clip, width=0.17\textwidth, keepaspectratio=true]{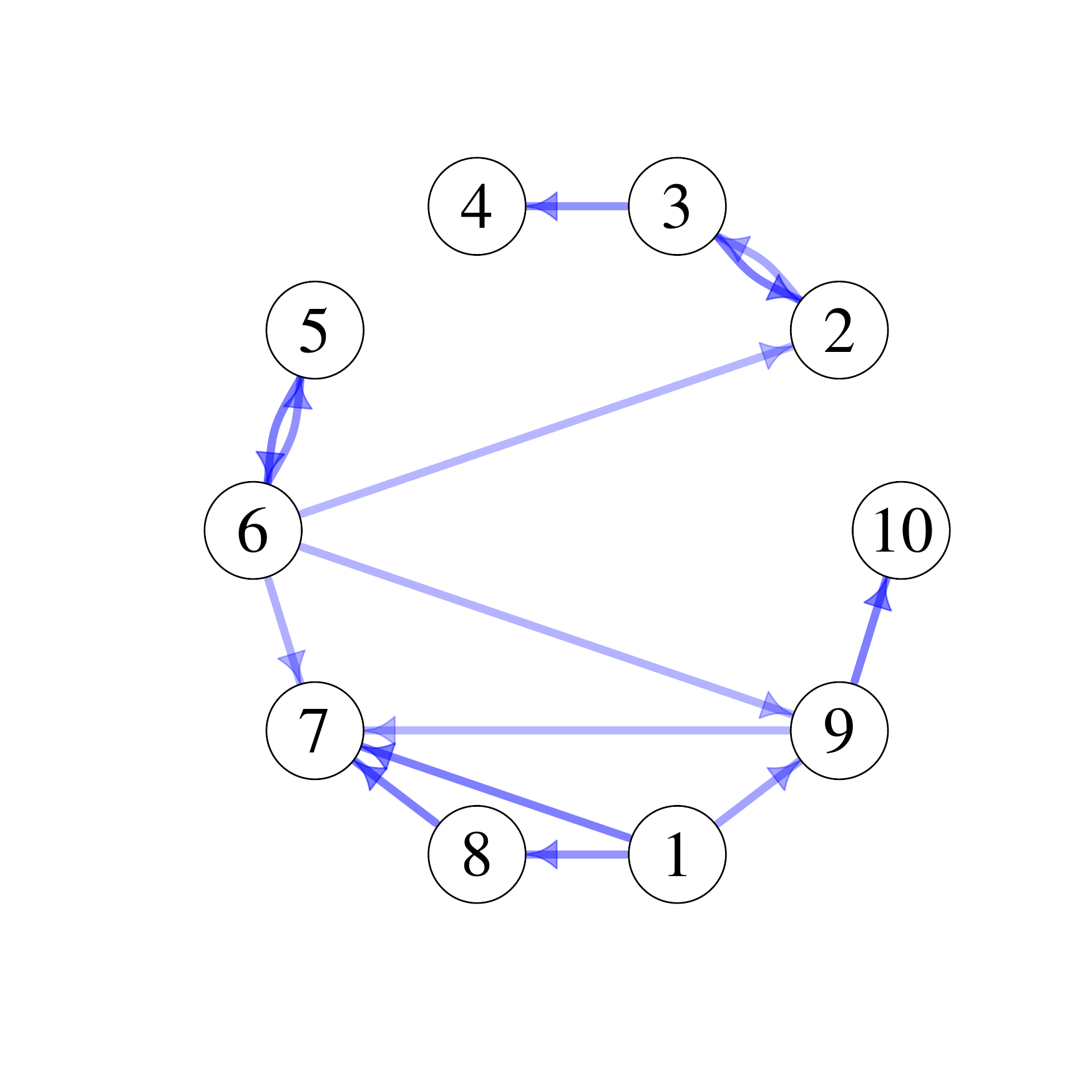}
}} & \hspace{-6mm}
\subfloat{
    \raisebox{-.4\height}{\includegraphics[trim=85 85 50 70, clip, width=0.175\textwidth, keepaspectratio=true]{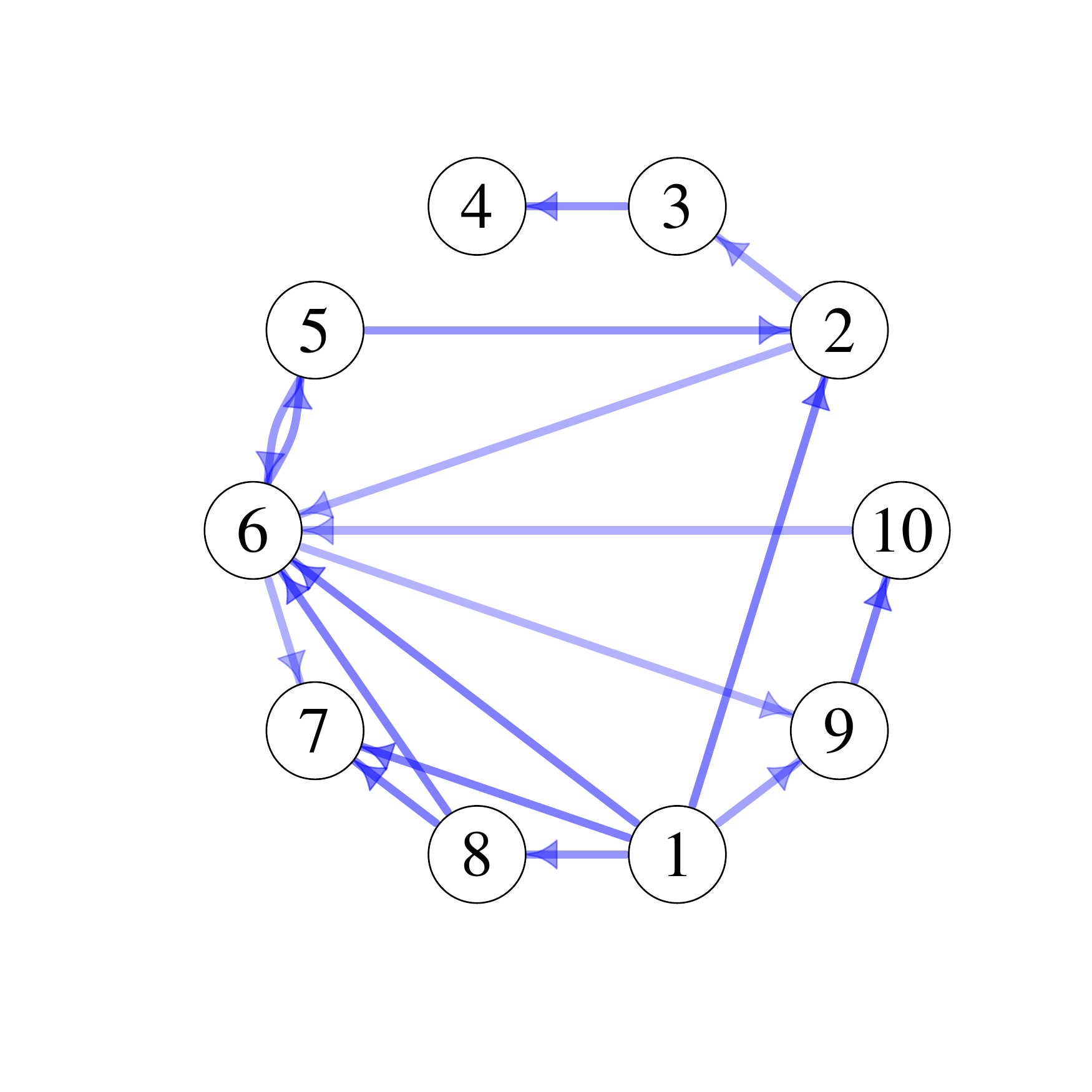}
}} \\
&  \hspace{-2mm} \tiny ${\bf \mathbf{SHD}= 17}, \vert t \vert = 0.91$ &  \hspace{-2mm} \tiny${\bf \mathbf{SHD} = 14}, \vert t \vert = 0.68$&  \hspace{-2mm}\tiny${\bf \mathbf{SHD} = 16}, \vert t \vert = 0.98$ &  \hspace{-2mm} \tiny${\bf \mathbf{SHD} = 8}, \vert t \vert = 0.25$&  \hspace{-2mm} \tiny${\bf \mathbf{SHD} = 7}, \vert t \vert = 0.29$\\
  \end{tabular}
 \captionof{figure}{{\small Point estimates of \ice and \ling for synthetic data. We threshold the point estimate of \ice at $t = \pm 0.25$ to exclude those entries which are close to zero. We then threshold the estimate of \ling so that the two estimates have the same number of edges. 
 In Setting~4, we threshold \ling at $t = \pm 0.25$ as \ice returns the empty graph. In Setting~3, it is not possible to achieve the same number of edges as all remaining coefficients in the point estimate of \ling are equal to one in absolute value. The transparency of the edges illustrates the relative magnitude of the estimated coefficients. We report the structural Hamming distance ($\mathrm{SHD}$) for each graph. Precision and recall values are shown in Figure~\ref{fig:sim_true_metrics}\protect\subref{fig:sim_metrics}.
 }} \label{fig:sim_results1}
\vspace{-5.5mm}
\end{table*}

We allow for hidden variables in only one out of five settings as \ling assumes causal sufficiency and can thus in theory not cope with hidden variables.
If no hidden variables are present, the pooled data can be interpreted as coming from a model whose error variables follow a mixture distribution. But if one of the error variables comes from the second mixture component, for example, the other error variables come from the second mixture component, too. In this sense, the data points are not independent anymore. This poses a challenge for \ling which assumes an \iid sample.
We also cover a case (for $\iMult=0$) in which all assumptions of \ling are satisfied (Scenario 4). 

Figure~\ref{fig:sim_results1} shows the estimated connectivity
matrices for five different settings and
Figure~\ref{fig:sim_true_metrics}\subref{fig:sim_metrics} shows the
obtained precision and recall values. In Setting~1, $n = 1000$,
$\iMult = 1$ and there are no hidden variables. In Setting~2, $n$ is
increased to $10000$ while the other parameters do not change. We
observe that \ice retrieves the correct adjacency matrix in both cases
while \lings's estimate is not very accurate. It improves slightly
when increasing the sample size. 
In Setting~3, we do include hidden variables which violates the causal sufficiency assumption required for \lings. Indeed, the estimate is worse than in Setting~2 but somewhat better than in Setting~1. \ice retrieves two false positives in this case. 
Setting~4 is not feasible for \ice as the distribution of the variables is identical in all environments (since $\iMult=0$). In Step 2 of the algorithm, \ffdiag does not converge and therefore the empty graph is returned. So the recall value is zero while precision is not defined. For \ling all assumptions are satisfied and the estimate is more accurate than in the Settings~1--3. Lastly, Setting~5 shows that when increasing the intervention strength to $0.5$, \ice returns a few false positives. Its performance is then similar to \ling which returns its most accurate estimate in this scenario.
The stability selection results for \ice are provided in Figure~\ref{fig:sim_results_stabSel} in Appendix~\ref{sec:appFig5}.

In short, these results suggest that the \ice point estimates are close to the true graph if the interventions are sufficiently strong. Hidden variables make the estimation problem more difficult but the true graph is recovered if the  strength of the intervention is increased (when increasing $\iMult$ to $1.5$ in Setting~3, \ice obtains a $\mathrm{SHD}$ of zero). In contrast, \ling is unable to cope with hidden variables but also has worse accuracy in the absence of hidden variables under these shift interventions.

%
%
%
\begin{figure*}[!tp]
\begin{centering}
\vspace{-.1cm}
\hspace{-.65cm}
\subfloat[]
{
    \includegraphics[trim=50 80 50 70, clip, width=0.23\textwidth, keepaspectratio=true]{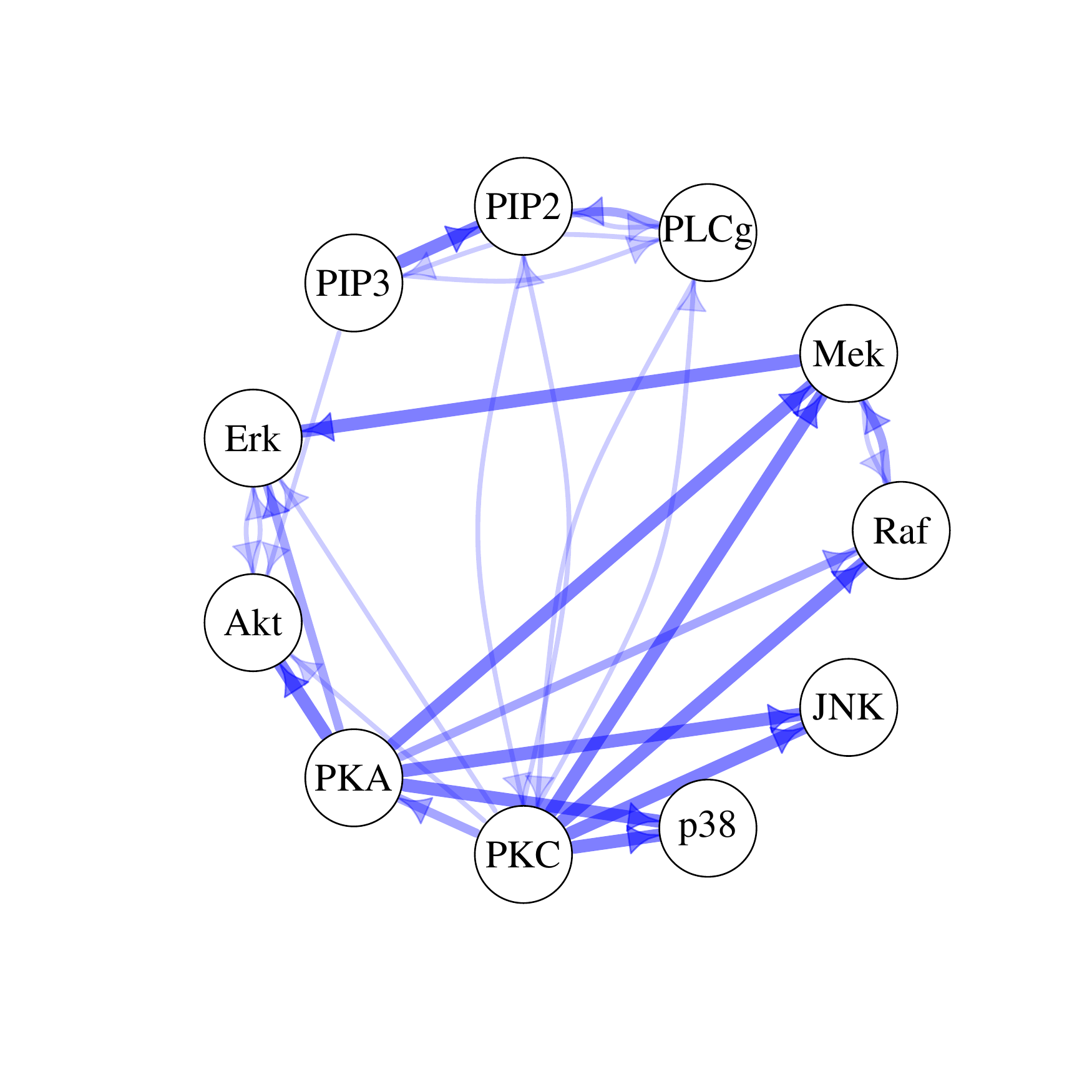}
    \label{fig:cons}
}
\hspace{-0.15cm}
\subfloat[]
{
    \includegraphics[trim=50 80 50 70, clip, width=0.23\textwidth, keepaspectratio=true]{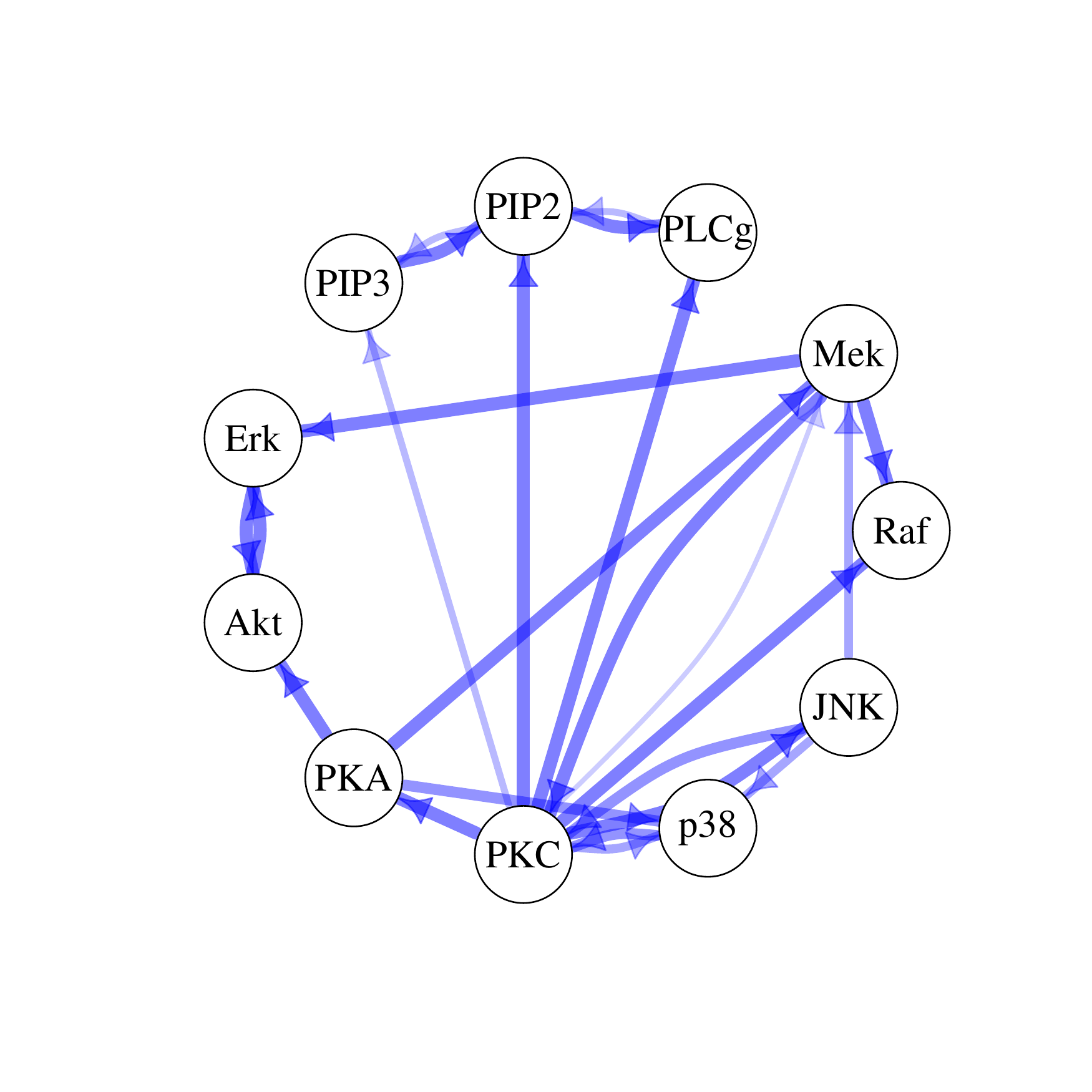}
    \label{fig:mooijcyc}
}
\hspace{-0.15cm}
\subfloat[] 
{
    \includegraphics[trim=50 80 50 70, clip, width=0.23\textwidth, keepaspectratio=true]{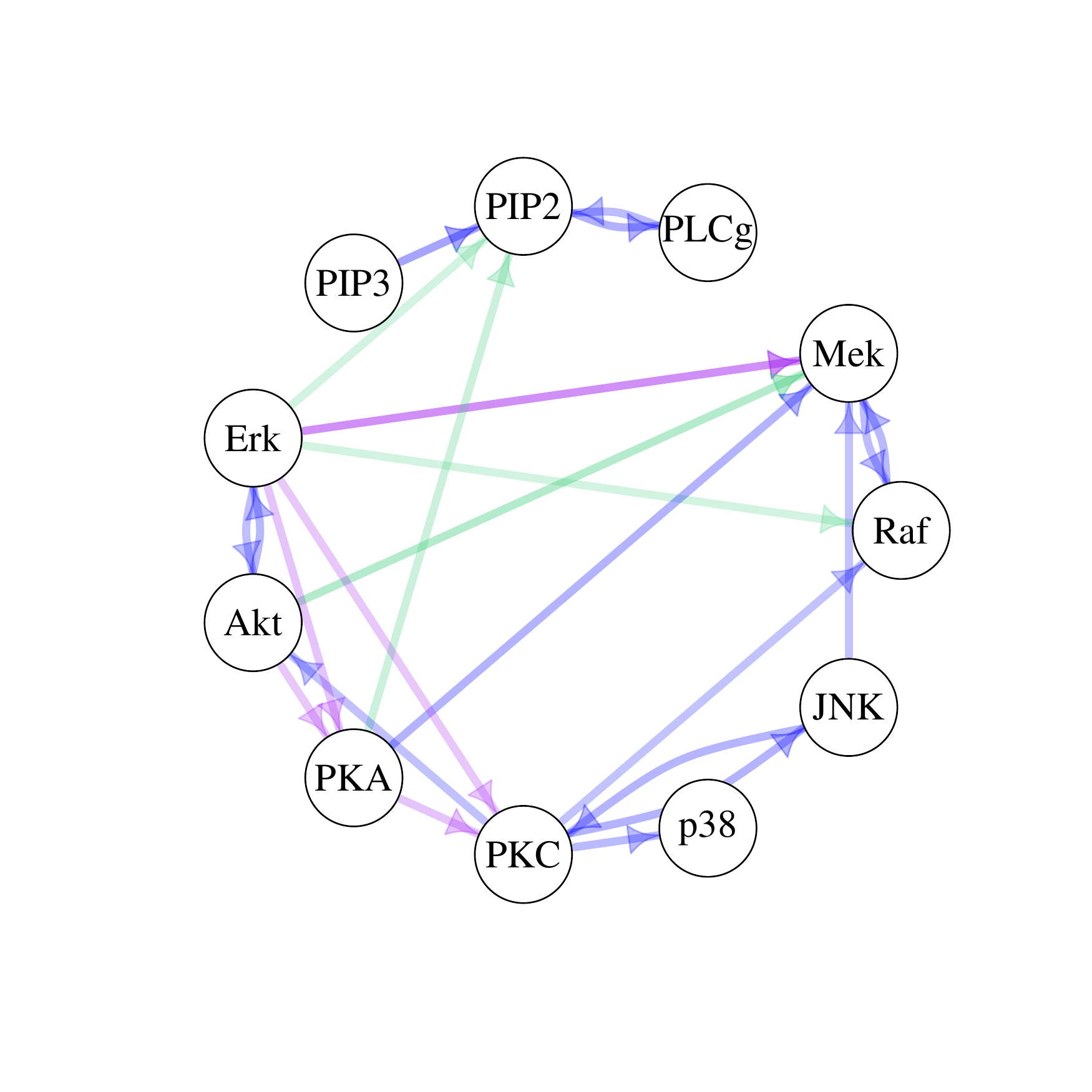}
    \label{fig:point}
}
\hspace{-0.15cm}
\subfloat[]
{
    \includegraphics[trim=50 80 50 70, clip, width=0.23\textwidth, keepaspectratio=true]{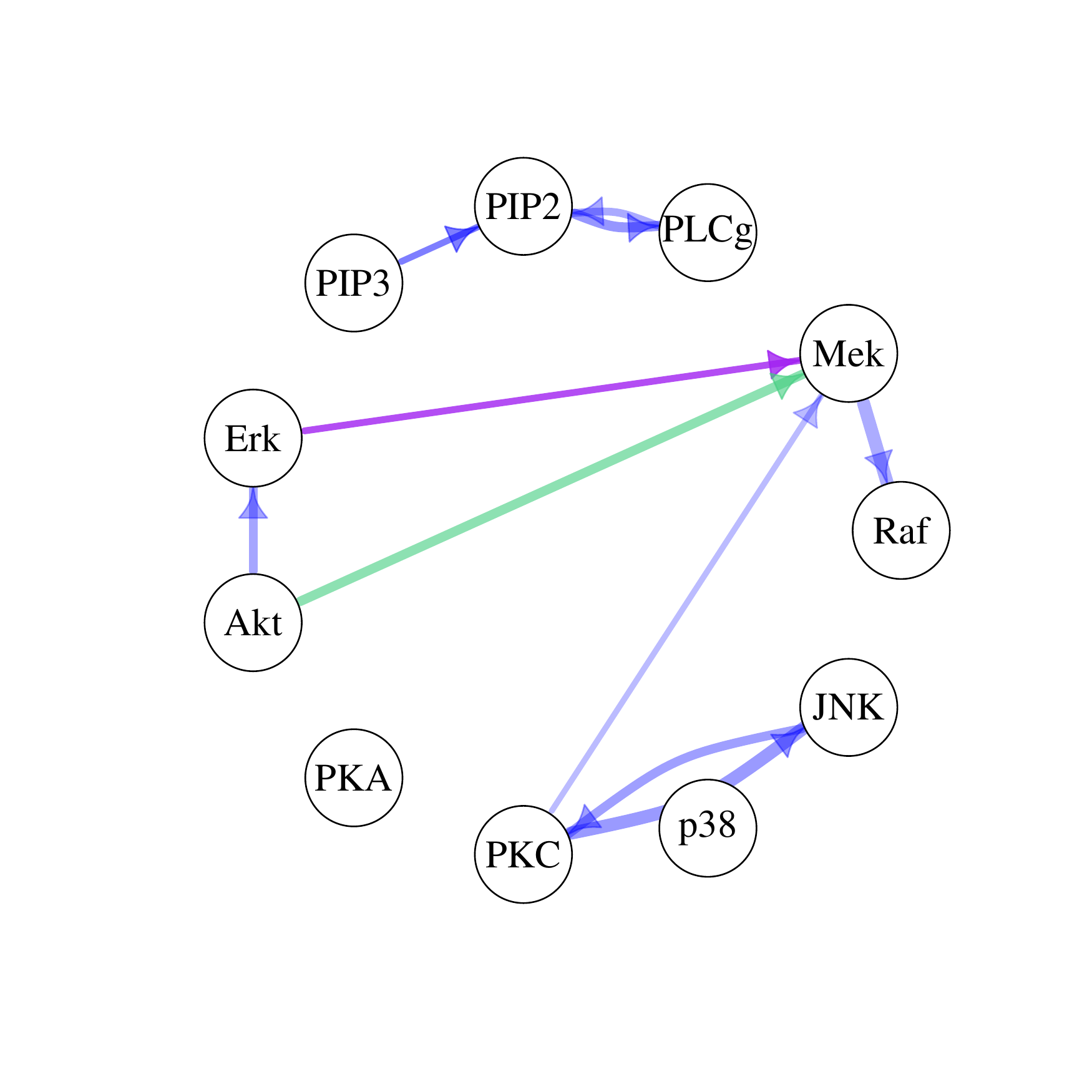}  
    \label{fig:ev5}
}
\vspace{-5pt}
\caption{{\small Flow cytometry data. 
\protect\subref{fig:cons} Union of the consensus network (according to \cite{sachs2005causal}), the reconstruction by \cite{sachs2005causal} and the best \emph{acyclic} reconstruction by \cite{MooijHeskes_UAI_13}. The edge thickness and intensity reflect in how many of these three sources that particular edge is present. 
\protect\subref{fig:mooijcyc} One of the \emph{cyclic} reconstructions by \cite{MooijHeskes_UAI_13}. The edge thickness and intensity reflect the probability of selecting that particular edge in the stability selection procedure. For more details see \cite{MooijHeskes_UAI_13}. 
\protect\subref{fig:point} \ice point estimate, thresholded at $\pm 0.35$. The edge intensity reflects the relative magnitude of the coefficients and the coloring is a comparison to the union of the graphs shown in panels~\protect\subref{fig:cons} and~\protect\subref{fig:mooijcyc}. Blue edges were also found in \cite{MooijHeskes_UAI_13} and \cite{sachs2005causal}, purple edges are reversed and green edges were not previously found in~\protect\subref{fig:cons} or~\protect\subref{fig:mooijcyc}.
\protect\subref{fig:ev5} \ice stability selection result with parameters $\mathbb{E}(V) = 2$ and $\pi_{thr} = 0.75$. The edge thickness illustrates how often an edge was selected in the stability selection procedure.} 
\label{fig:sachsdata}}
\end{centering}
\vspace{-5.5mm}
\end{figure*}

\vspace{-1.5mm}
\subsection{Flow cytometry data}\label{sec:sachs}
\vspace{-1mm}
The data published in \cite{sachs2005causal} is an instance of a data set where the external interventions differ between the environments in $\J$ and might act on several compounds simultaneously \cite{eaton2007exact}.
There are nine different experimental conditions with each  containing roughly 800 observations which correspond to measurements of the concentration of biochemical agents in single cells. The first setting corresponds to purely observational data. 

In addition to the original work by \cite{sachs2005causal}, the data set has been described and analyzed in \cite{eaton2007exact} and \cite{MooijHeskes_UAI_13}. We compare against the results of \cite{MooijHeskes_UAI_13}, \cite{sachs2005causal} and the ``well-established consensus'', according to \cite{sachs2005causal}, shown in Figures~\ref{fig:sachsdata}\subref{fig:cons} and~\ref{fig:sachsdata}\subref{fig:mooijcyc}. Figure~\ref{fig:sachsdata}\subref{fig:point} shows the (thresholded) \ice point estimate. Most of the retrieved edges were also found in at least one of the previous studies. Five edges are reversed in our estimate and three edges were not discovered previously.
Figure~\ref{fig:sachsdata}\subref{fig:ev5} shows the corresponding stability selection result with the expected number of falsely selected variables 
$\mathbb{E}(V) = 2$. This estimate is sparser in comparison to the other ones as it bounds the number of false discoveries. Notably, the feedback loops between PIP2 $ \leftrightarrow $ PLCg and PKC $\leftrightarrow$ JNK  were also found in \cite{MooijHeskes_UAI_13}. 

It is also noteworthy that we can check the model assumptions of shift interventions, which is important for these data as they can be thought of as changing the mechanism  or activity of a biochemical agent rather than regulate the biomarker directly \cite{MooijHeskes_UAI_13}. If the shift interventions are not appropriate, we are in general not able to diagonalize the differences in the covariance matrices. Large off-diagonal elements in the estimate of the r.h.s\  in~\eqref{eq:diff} indicate a mechanism change that is not just explained by a shift intervention as in~\eqref{eq:model}.  
In four of the seven interventions environments with known intervention targets the largest mechanism violation happens directly at the presumed intervention target, 
see Appendix~\ref{supp:sachs} for details. It is worth noting again that the presumed intervention target had not been used in reconstructing the network and mechanism violations.

\begin{figure*}[!tp]
\begin{centering}
\vspace{-.5cm}
\hspace{-.45cm}
\subfloat[Prices (logarithmic)]{
    \includegraphics[trim=10 10 0 0, clip, width=0.25\textwidth, keepaspectratio=true]{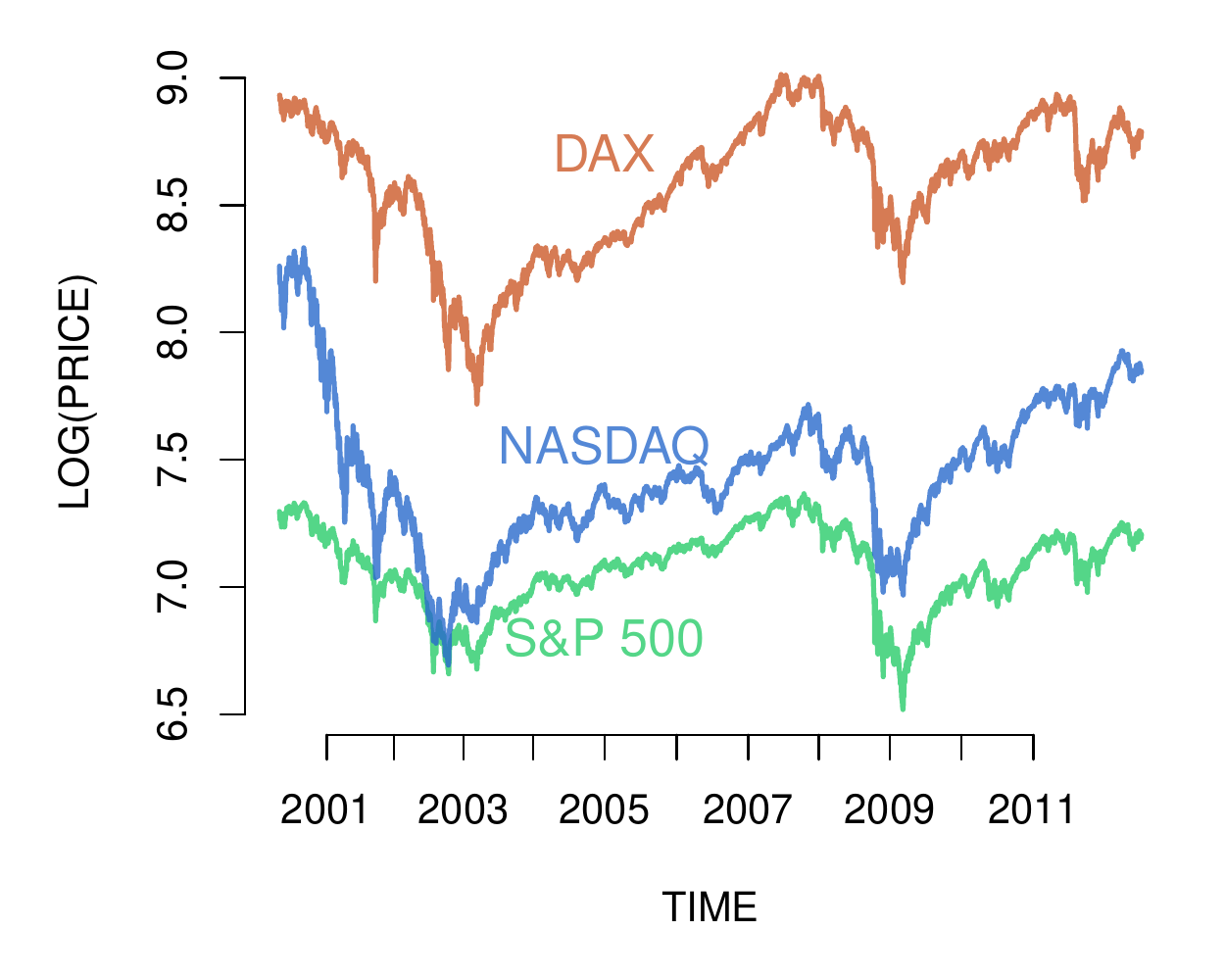}    \label{fig:1}
}
\hspace{-0.2cm}
\subfloat[Daily log-returns]{
    \includegraphics[trim=10 10 0 0, clip, width=0.25\textwidth, keepaspectratio=true]{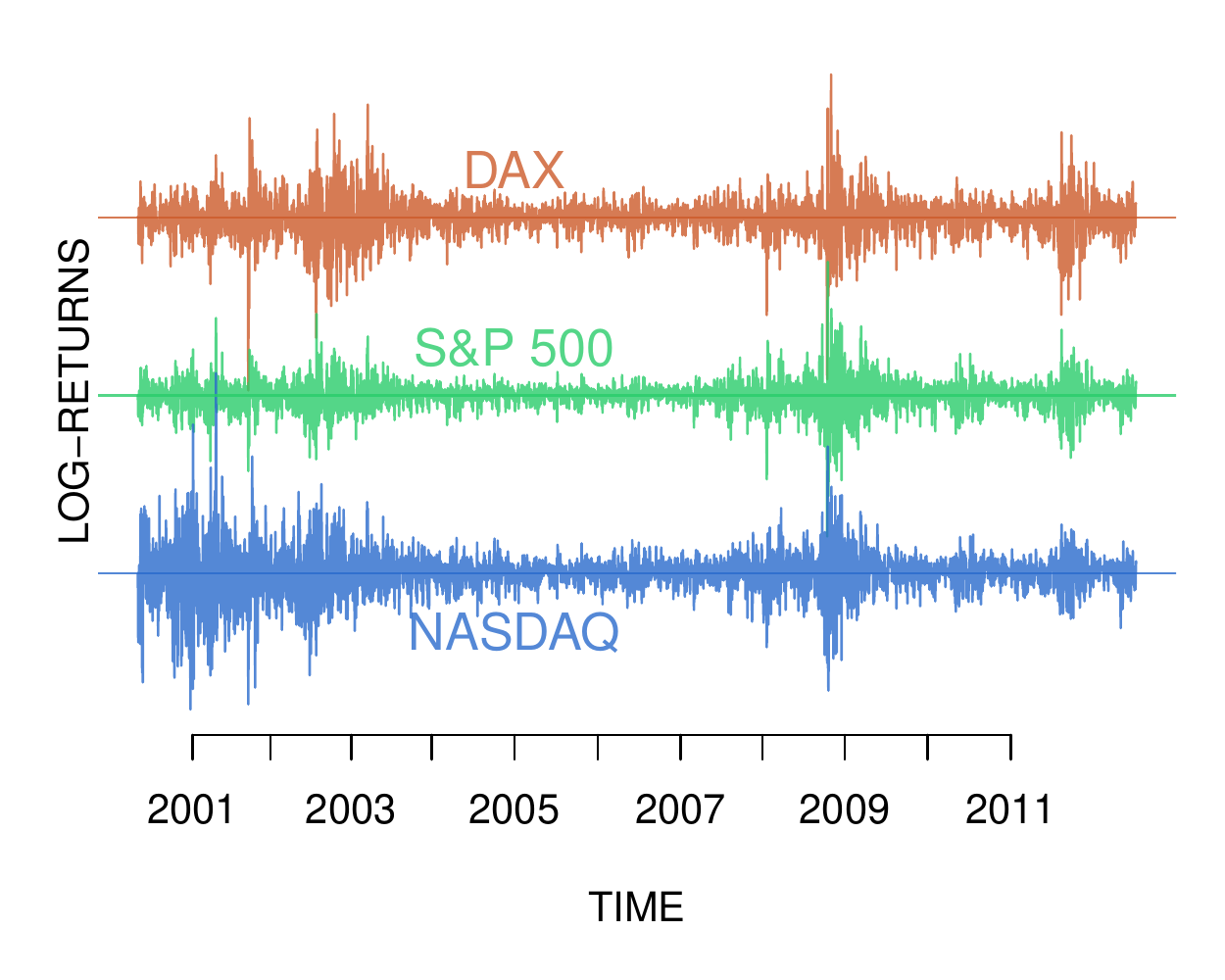}
    \label{fig:2}
}
\hspace{-0.2cm}
\subfloat[ \ice ]{
    \includegraphics[trim=10 10 0 0, clip, width=0.25\textwidth, keepaspectratio=true]{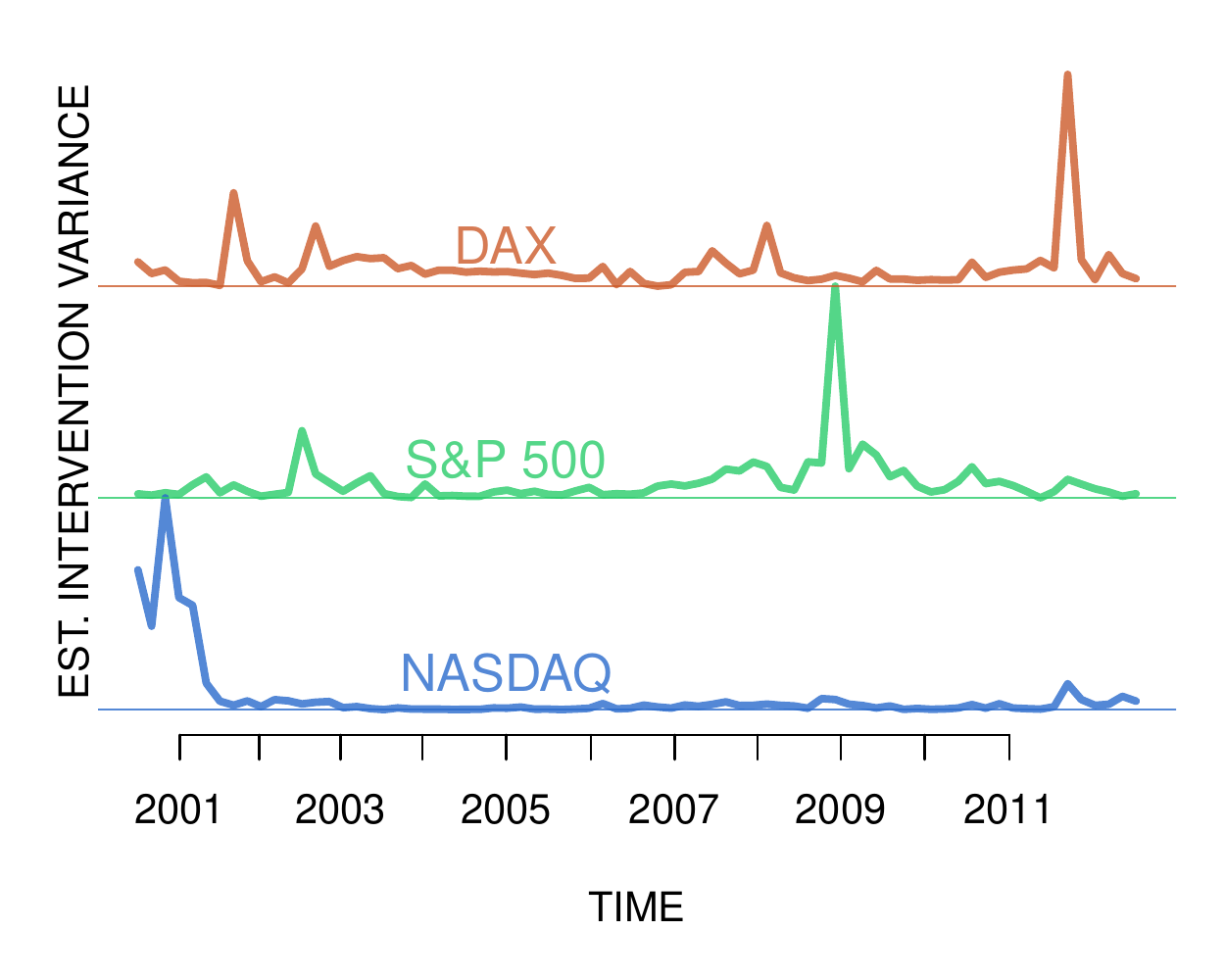}
    \label{fig:4a}
}
\hspace{-0.2cm}
\subfloat[ \ling ]{
    \includegraphics[trim=10 10 0 0, clip, width=0.25\textwidth, keepaspectratio=true]{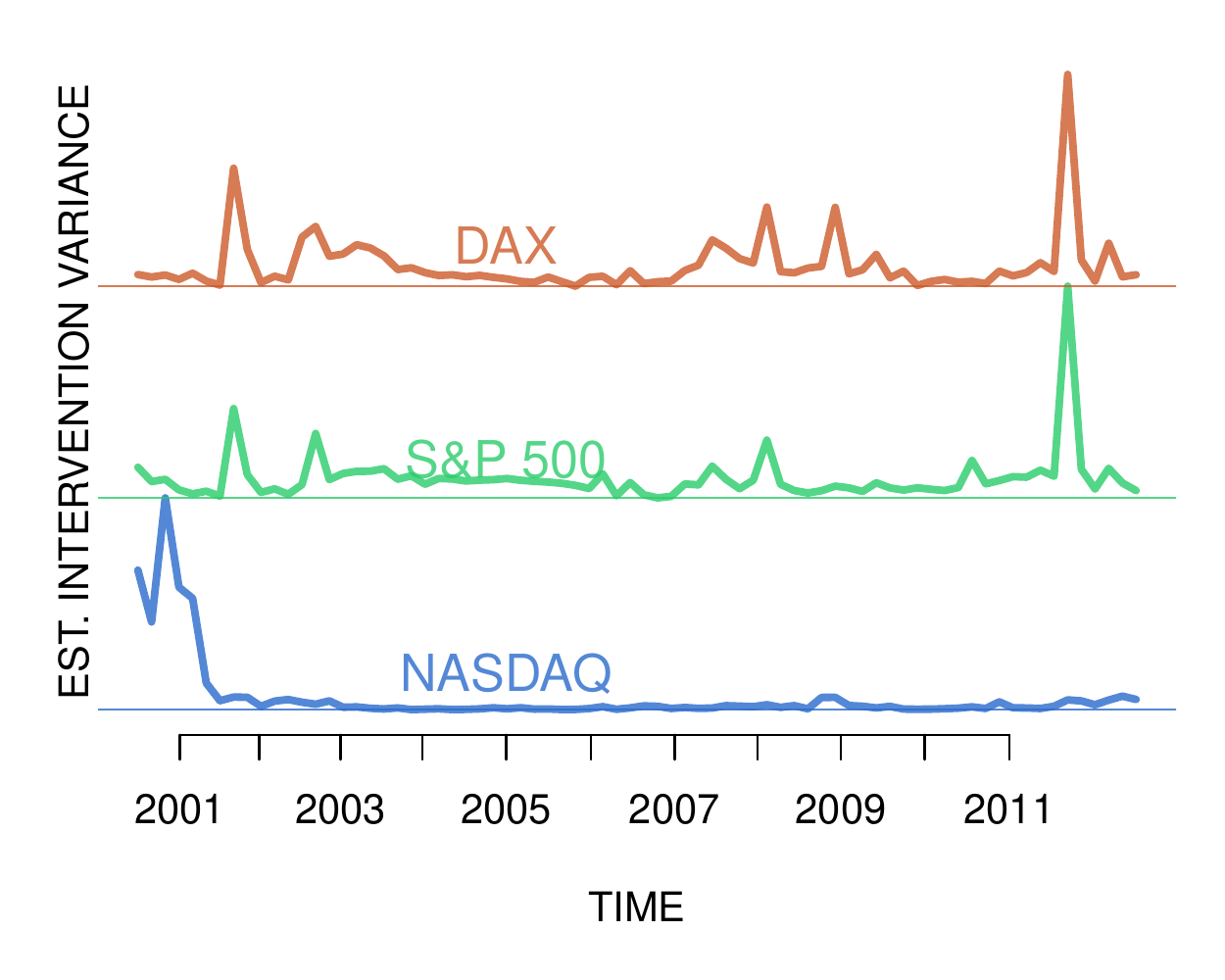}
    \label{fig:4}
}
\vspace{-5pt}
\caption{{\small  Financial time series with three stock indices:
    NASDAQ (blue; technology index), S\&P 500 (green; American equities) and DAX
    (red; German equities). (a) Prices of the three indices between May
    2000 and end of 2011 on a logarithmic scale. (b)  The scaled log-returns
    (daily change in log-price) of  the three instruments are shown.
    Three periods of increased volatility are visible starting with the dot-com
    bust on the left to the financial crisis in 2008 and the August
    2011 downturn. (c) The scaled estimated intervention variance with the
    estimated \ice network. The three down-turns are clearly separated as originating in technology, American and European equities. (d) In contrast, the analogous \ling estimated intervention variances have a peak in American equities intervention variance during the European debt crisis in 2011. }\label{fig:fin_var}}

\end{centering}
\vspace{-0.3cm}
\end{figure*}
\vspace{-0.15cm}

\subsection{Financial time series}\label{sec:finance}
Finally, we present an application in financial time series where the environment is clearly changing over time. We consider daily data from three stock indices NASDAQ, S\&P 500 and DAX for a period between 2000-2012 and group the data into 74 overlapping blocks of 61 consecutive days each. We take log-returns, as shown in panel (b) of Figure~\ref{fig:fin_var} and estimate the connectivity matrix, which is fully connected in this case and perhaps of not so much interest in itself.  It allows us, however, to estimate the intervention strength at each of the indices according to~\eqref{eq:diff_emp}, shown in panel (c). The intervention variances separate very well the origins of the three major down-turns of the markets on the period. Technology is correctly estimated by \ice to be at the epicenter of the dot-com crash in 2001 (NASDAQ as proxy), American equities during the financial crisis in 2008 (proxy is S\&P 500) and European instruments (DAX as best proxy) during the August 2011 downturn. 

\vspace{-0.1cm}
\section{Conclusion}
\vspace{-0.1cm}
We have shown that cyclic causal networks can be estimated if we  obtain covariance matrices of the variables under unknown shift interventions in different environments. 
\ice leverages solutions to the linear assignment problem and joint matrix diagonalization and the part of the computational cost that depends on the number of variables is at worst cubic. We have shown sufficient and necessary conditions under which the network is fully identifiable, which require observations from at least three different environments.  The strength and location of interventions can also be reconstructed.

\section*{References}
{\small
\bibliographystyle{unsrt}
\bibliography{bibliography}
}

\newpage
\appendix
\begin{center}
Appendix to \vspace{0.4cm}\\
{\Large \bf \ices: Learning causal cyclic graphs from unknown shift interventions}
\end{center}
\section{Identifiability -- Proof of Theorem \ref{theorem:1}}\label{sec:supp_ident}

\begin{proof}
``if'':
Let ${\D'}$ be a solution of \eqref{eq:Dhatpop}. Let us write $\D_{m \bullet}'$ for the $m$-th row of $ \D'$ and $\D_{m \bullet}$ for the $m$-th row of ${\D}$, $m=1,\ldots,p$. Furthermore let us define $\mb g_m := {\D}^{-T} \D_{m \bullet}'$, $m=1,\ldots,p$. 
We will show that at most one entry of this vector is nonzero. Note that by equation~\eqref{eq:diff} we have $   \DSxj=  {\D}^{-1} \DScj{\D}^{-T}$ for all $j \in \J$. By equation~\eqref{eq:diff}, $L(\D \DSxj \D^T)=0$. As $\D'$ solves equation~\eqref{eq:Dhatpop}, this implies  $L( \D' \DSxj \D'^T ) =0$ for all $j \in \J$. Hence the offdiagonal elements of $\D' \DSxj \D'^T$ are zero, which implies
\begin{align*}
	\mb g_{m'} \perp \DScj \mb g_{m} \mbox{ for all } m' \neq m \mbox{ and for all $j \in \J$}. 
\end{align*}
As the $\mb g_{m'}$ are linearly independent, this implies that for all pairs $j,j'\in \J$, $ \DScj \mb g_m$ and $ \DScjp \mb g_m$ 
are collinear i.e.\ for all $(j,j')$ there exists a $\lambda_{j,j'} \in \mathbb{R}$ such that $ \DScj \mb g_m = \lambda_{j,j'} \DScjp \mb g_m$ or $ \lambda_{j,j'} \DScj \mb g_m = \DScjp \mb g_m$ 


Take arbitrary $k,l \in \{ 1, \ldots, p\}$ and choose $j,j' \in \J$ such that \eqref{eq:manyunique} is satisfied. By the argumentation above, there exists a $\lambda_{j,j'} \in \mathbb{R}$ such that $ \DScj \mb g_m = \lambda_{j,j'} \DScjp \mb g_m$ or $ \lambda_{j,j'} \DScj \mb g_m = \DScjp \mb g_m$. Without loss of generality let us assume the latter. Recall that both $\DScj$ and $\DScjp$ are diagonal matrices.
Now condition \eqref{eq:manyunique} implies that the $k$-th or the $l$-th entry on the diagonal of $\lambda_{j,j'} \ \DScj-  \DScjp$ is nonzero (or both). Hence, the $k$-th or the $l$-th entry of $ \mb g_m$ s zero (or both). By repeating this argumentation for all $k$ and $l$, at most one entry of $ \mb g_m$ is nonzero. Thus, $\D_{m \bullet}' = \D^T \mb g_m = ( \mb g_m^T \D)^T$ is a multiple of one of the rows of ${\D}$. 

By applying this argumentation for all $m=1,\dots,p$, each row of $\D'$ is a multiple of one of the rows of $\D$. As both ${\D}$ and ${\D'}$ are invertible, there exists a bijection between the rows of ${\D'}$ and ${\D}$ such that the corresponding rows are collinear. Furthermore, the diagonal of $ \D'$ and $\D$ is $(1,\ldots,1)$. Hence let us consider a bijection $\sigma : \{1,\ldots,p\} \mapsto \{1,\ldots,p \}$ such that the $\sigma(m)$-th row of $\D'$ is a multiple of the $m$-th row of $\D$, i.e. $\frac{1}{\D_{\sigma(m), m}'}   \D_{\sigma(m) \bullet}'= \D_{m \bullet}$ for all $m=1, \ldots,p$. 
We want to show that this bijection is the identity. First observe that, as the diagonal of $\D'$ and $\D$ is $(1,\ldots,1)$, $ \frac{1}{\D_{\sigma(m),m}'} = \D_{m,\sigma(m)}$ for all $m=1,\ldots,p$. Now let us consider a cycle in this permutation
, i.e. $m_1,\ldots,m_{\eta+1}=m_1$, $\eta > 1$, $m_\iota \neq m_{\kappa}$ for $ 1 \le \iota < \kappa \le \eta$ and with $\sigma(m_\iota)=m_{\iota+1}$ for $1 \le \iota \le \eta$. 
If this leads to a contradiction, we can conclude that $\sigma$ is the identity.
As $\D_{m,m} =1 $, $ \D_{\sigma(m),m}' \neq 0$, i.e. $ \D_{m_{\iota+1},m_{\iota}}'\neq 0$ for $ 1\le \iota \le \eta$. This corresponds to a cycle in with product 
\begin{equation}\label{eq:cycprod}
	\prod_{\iota =1,\ldots,\eta} \D_{m_{\iota+1},m_\iota}' = \prod_{\iota=1,\ldots,\eta} \frac{1}{\D_{m_{\iota},m_{\iota+1}} }.
\end{equation}
As $\D'$ is a solution of \eqref{eq:Dhatpop}, $\CP(\mb I - \D')<1$, hence the product on the left hand side of equation \eqref{eq:cycprod} is in absolute value strictly smaller than $1$, see~\eqref{CP}. Analogously, as $ \D_{m_{\iota},m_{\iota+1}} \neq 0$ for $\iota =1,\ldots,\eta$, the sequence $m_{\eta+1},m_{\eta},\ldots,m_{1}$ 
corresponds to a cycle with product
\begin{equation*}
	\prod_{\iota=1,\ldots,\eta} \D_{m_{\iota},m_{\iota+1}}. 
\end{equation*}
Using the same argumentation as for $\D'$, this product is in absolute value strictly smaller than $1$, which contradicts~\eqref{eq:cycprod}.
Hence such cycles of length $\geq 2$  do not exist and $\sigma$ is the identity. Hence, $\D' = \D$.

``only if'':
As above define $\D_{m \bullet}$ as the $m$-th row of $\D$ and let us write $\mb u_m \in \mathbb{R}^p$ for the $m$-th unit vector for $m=1,\ldots,p$. Assume that \eqref{eq:manyunique} is not true, i.e. there exist $k,l \in \left\{ 1,\dots,p \right\}$ such that for all $j,j' \in \J$,
\begin{equation}\label{eq:etaequality}
	(\DScj)_{kk} ( \DScjp )_{ll} = (\DScj)_{ll} ( \DScjp)_{kk}. 
\end{equation}
Without loss of generality let us fix a  $j' \in \J$ with $ ( \DScjp )_{kk} \neq 0$
, and define $ \lambda :=  ( \DScjp )_{ll} / ( \DScjp )_{kk}$. If such a $j'$ does not exist, we can apply the same argumentation as below but with the $k$ and $l$ interchanged and $\lambda := 0$. 

Note that the definition of $\lambda$ does not depend on $j$ and that by equation~\eqref{eq:diff} we have $   \DSxj=  {\D}^{-1} \DScj{\D}^{-T}$. Then, for $ \delta \in \mathbb{R}$ we can define $\D_{k \bullet}' := \D_{k \bullet} + \delta \D_{l \bullet}$ and $ \D_{l \bullet}' := \D_{l \bullet} - \delta \lambda \D_{k \bullet}$ and we obtain for all $j \in \J$ 
\begin{align*}
	\D_{l \bullet}'^T \DSxj \D_{k \bullet}' &= (\mb u_l - \delta \lambda \mb u_k)^T \DScj (\mb u_k+\delta \mb u_l) \\
	&= \delta (\DScj)_{ll}-\delta \lambda (\DScj)_{kk} \\
	&= 0.
\end{align*}
In the second equation we used \eqref{eq:etaequality}. Furthermore, for small $\delta$ let us scale $\D_{k \bullet}'$ such that the $k$-th component of the vector is $1$. Analogously, let us scale $\D_{l \bullet}'$ such that the $l$-th component of the vector is $1$. Then we can define the matrix ${\D'}$ as the rows of ${\D}$ except for row $k$ and $l$ which are replaced by $\D_{k \bullet}'$ and $\D_{l \bullet}'$. By above reasoning, this matrix satisfies 
\begin{equation*}
	{\D'} \DSxj {\D'}^T \in \mbox{Diag}(p) 
\end{equation*}
for all $j \in \J$ and ${\D'}$ is invertible. Furthermore, the diagonal elements of ${\D'}$ are $1$. Recall that the path-products of $\mb I - \D $ over cycles are in absolute value smaller than $1$, see \eqref{CP}. For small $\delta$, $ \mb I - {\D'} $ is close to $ \mb I - {\D} $ (in an arbitrary matrix norm) and hence the path products of $\mb I - \D' $ over cycles are in absolute value smaller than $1$ as well. As $\D$ is invertible, $\D' \neq \D$. Hence the solution to \eqref{eq:Dhatpop} is not unique. This concludes the proof.

\end{proof}
\section{Polynomial-time algorithm}\label{sec:supp_algo}

Here, we provide the necessary theoretical result to show that \ice has a computational cost of $O(\vert \J \vert \cdot n \cdot p^2)$. Specifically, we show that Step~3 in Algorithm~\ref{alg:hiddenice} can be cast in terms of the classical linear sum assignment problem, having a computational complexity of $O(p^3)$.

\begin{theorem}\label{theorem:algo}
	Let $\D \in \mathbb{R}^{p \times p}$ be a matrix with $\CP(\D)<1$, $\mbox{diag}(\D) \equiv 1$ and $ \D_{k,l} \neq 0$ for $k,l \in \{1, \ldots,p \}$. For $\D' \in \mathbb{R}^{p \times p}$ define
\begin{equation*}
	P(\D') := \prod_{k,l} |\D_{k,l}' |.
\end{equation*}
Furthermore define 
\begin{align*}
	\mathcal{D}_p := \{ \D' : &\text{ There exists a permutation $\sigma$ of $\{ 1,\ldots,p \}$ such that the $\sigma(m)$-th row of $\D$ } \\ &\text{ is collinear to 
  the $m$-th row of $\D'$ and diag$(\D') \equiv 1$ } \}.
\end{align*}
Then,
\begin{equation*}
	\D = \arg \min_{\D' \in \mathcal{D}_p} P(\D') = \arg \min_{\D'
          \in \mathcal{D}_p} \log P(\D').
\end{equation*}
\end{theorem}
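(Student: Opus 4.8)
The plan is to parametrize the feasible set $\mathcal{D}_p$ by permutations and then reduce the optimization to a statement about products along generalized diagonals of $\D$. Since $\D$ has no zero entries, every $\D' \in \mathcal{D}_p$ is obtained from a permutation $\sigma$ by taking, for each $m$, the $\sigma(m)$-th row of $\D$ and rescaling it so that its $m$-th coordinate equals one; explicitly $\D'_{m,l} = \D_{\sigma(m),l}/\D_{\sigma(m),m}$, which is well defined because $\D_{\sigma(m),m}\neq 0$. The identity permutation recovers $\D$ itself, so $\D \in \mathcal{D}_p$.

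First I would compute $P(\D')$ in closed form. Writing out the product of all absolute entries and separating numerator from denominator gives
\[
P(\D') = \frac{\prod_{m,l}|\D_{\sigma(m),l}|}{\prod_{m,l}|\D_{\sigma(m),m}|} = \frac{P(\D)}{\bigl(\prod_{m}|\D_{\sigma(m),m}|\bigr)^{p}},
\]
where the numerator collapses to $P(\D)$ because $\sigma$ merely permutes the rows (so each row of $\D$ contributes its entry-product exactly once), and the denominator factor $|\D_{\sigma(m),m}|$ is independent of $l$ and therefore appears with exponent $p$. Since $P(\D)>0$ is a fixed constant, minimizing $P(\D')$ over $\mathcal{D}_p$ is equivalent to maximizing the generalized-diagonal product $\pi(\sigma) := \prod_m |\D_{\sigma(m),m}|$ over all permutations $\sigma$.

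The core step is to show $\pi(\sigma) \le 1$ for every $\sigma$, with equality only for $\sigma = \mathrm{id}$. To do this I would decompose $\sigma$ into disjoint cycles, so that $\pi(\sigma)$ factorizes as a product over these cycles. A fixed point $m$ contributes $|\D_{m,m}| = 1$, while a cycle $m_1 \to m_2 \to \cdots \to m_\eta \to m_1$ of length $\eta > 1$ contributes precisely $\prod_{k}|\D_{m_{k+1},m_k}|$, which is exactly a cycle-product in the sense of the definition~\eqref{CP} of $\CP$ and is therefore strictly less than one because $\CP(\D)<1$. Consequently $\pi(\sigma)$ is a product of factors each at most one, so $\pi(\sigma)\le 1$, with equality iff $\sigma$ has no cycle of length exceeding one, i.e.\ $\sigma = \mathrm{id}$.

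Putting these together, $P(\D') = P(\D)/\pi(\sigma)^p \ge P(\D)$ with equality iff $\sigma = \mathrm{id}$; since the rows of the invertible matrix $\D$ are linearly independent, distinct permutations yield distinct matrices, so $\D$ is the unique minimizer of $P$ over $\mathcal{D}_p$. Because $\log$ is strictly increasing, the minimizer of $\log P$ coincides with that of $P$, which gives the second equality. The hard part will be the identification in the third paragraph: recognizing that the generalized-diagonal product factorizes over the cycles of $\sigma$ and that each nontrivial factor is literally one of the cycle-products controlled by $\CP(\D)$; once that is in place the remainder is bookkeeping. As a byproduct, this yields the LAP reformulation used in the main text, since maximizing $\log \pi(\sigma) = \sum_m \log|\D_{\sigma(m),m}|$ is a linear sum assignment problem with cost entries $\log|\D_{i,m}|$.
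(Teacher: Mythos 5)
Your proof is correct and takes essentially the same route as the paper's: parametrize $\mathcal{D}_p$ by permutations, compute $P(\D') = P(\D)/\bigl(\prod_m |\D_{\sigma(m),m}|\bigr)^p$, and conclude from $\CP(\D)<1$ that $\prod_m |\D_{\sigma(m),m}|<1$ whenever $\sigma \neq \mathrm{id}$ (your explicit cycle decomposition of $\sigma$ is exactly the step the paper leaves implicit). The only superfluous element is your appeal to invertibility of $\D$, which is neither assumed in the theorem nor needed, since equality in $P(\D')\ge P(\D)$ already forces $\sigma=\mathrm{id}$ and hence $\D'=\D$.
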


\begin{proof}
	Let $\D' \in \mathcal{D}_p$ with $\D' \neq \D$. Let us write $\D_{m \bullet}$ for the $m$-th row of $\D$ and analogously $\D_{m \bullet}'$ for the $m$-th row of $\D'$, $m=1,\ldots,p$. Now let $\sigma$ be a permutation such that the $\sigma(m)$-th row of $\D$ is collinear to the $m$-th row of $\D'$. As $\D' \neq \D$, we have that $\sigma \neq \mbox{Id}$. As diag$(\D') \equiv 1$,
\begin{equation*}
	\frac{1}{\D_{\sigma(m),m}} \D_{\sigma(m)\bullet} = \D_{m \bullet}'.	
\end{equation*}
It immediately follows that
\begin{equation*}
	  \left( \prod_{m=1,\ldots,p}  \frac{1}{|\D_{\sigma(m),m}|} \right)^p  P(\D) = P(\D').
\end{equation*}
As $\CP(\D)<1$ and $\sigma$ is not the identity, $\prod_{m=1,\ldots,p} |\D_{\sigma(m),m} |< 1$. As all elements of $\D$ and $\D'$ are nonzero, $P(\D)>0$ and $P(\D')>0$. Hence, $P(\D') > P(\D)$. This concludes the proof.

\end{proof}

\textbf{Remark:} We can define the relative loss function of moving row $k$ to row $l$ as 
\begin{equation*}
	\ell(k,l) = -\log(|\D_{k,l}'|) .
\end{equation*}
Then the linear assignment problem that minimizes this problem also yields the correct permutation for Step~3 in Algorithm~\ref{alg:hiddenice} if it exists, i.e. the permutation $\sigma$ on $\{1, \ldots, p \}$ that minimizes
\begin{equation*}
	\sum_{k=1}^p \ell(k,\sigma(k))
\end{equation*}
satisfies that $\D_{m \bullet}'$ is collinear to  $ \D_{\sigma(m) \bullet}$.\\

\textbf{Remark:} Allowing for self-loops would lead to an identifiability problem, independent of the method. For every model with self-loops and $\CP<1$ there is a model without self-loops and $\CP \le 1$ yielding the same observational distribution in equilibrium. The connectivity matrix without self-loops can thus be seen as a representative of a whole class of connectivity matrices that allow self-loops. 
Specifically, if the connectivity matrix with self-loops is $\B^*$, define matrix ${\mb T}$ by $\texttt{PermuteAndScale}({\mb I}-\B^*)={\mb T}({\mb I}-\B^*)$, where $\texttt{PermuteAndScale}()$ is the operation defined in Step 3 of the \ice algorithm. Technically, $\texttt{PermuteAndScale}()$ is only defined for matrices that are nonzero outside of the diagonal. Using similar arguments as in Theorem~\ref{theorem:algo}, $\texttt{PermuteAndScale}()$ can be extended to arbitrary matrices with nonzero diagonal elements. To be more precise, there exists a matrix $\mb T$ such that $\CP({\mb T}({\mb I}-\B^*))\le 1$, $\mbox{diag}({\mb T}({\mb I}-\B^*)) \equiv 1$ and such that $\mb T$ is the product of a diagonal scaling matrix with a permutation matrix. 
Then define $\B_{new}:={\mb I}-{\mb T}({\mb I}-\B^*)$, ${\mb e}_{j,new} = {\mb T} {\mb e}_{j}$ and ${\mb c}_{j,new} = {\mb T} {\mb c_j}$ for all $j \in \J$. As ${\mb T}$ is the product of a diagonal scaling matrix with a permutation matrix, assumptions (B) and (C) are still fulfilled and 
$\x_{j,new} = ({\mb I}-\B_{new})^{-1}({\mb e}_{j,new}+{\mb c}_{j,new}) = ({\mb I}-\B^*)^{-1} ({\mb e}_{j}+{\mb c}_{j}) =\x_{j}$ for all $j \in \J$. This implies that the two matrices $\B^*$ with self-loops and $\B_{new}$ without self-loops (since it has zeroes on the diagonal by construction) have both $\CP \le 1$ and yield the same distribution.


\section{Intervention variances and model misspecification}\label{supp:sachs}
The method allows to validate and check the assumptions to some
extent. This is especially important in the data of
\cite{sachs2005causal} as pointed out in
\cite{MooijHeskes_UAI_13}. The interventions can mostly be thought
of as not changing the concentration of a biochemical agent but rather
changing the activity of the agent, for example by inhibiting the
reactions in which the agent is involved \cite{MooijHeskes_UAI_13}. Under such a mechanism
change, it is doubtful whether the interventions are well approximated
by our model~\eqref{eq:modelinterv} with independent
shift-interventions. We can check the assumptions by the success of
the joint diagonalization procedure. Specifically, we get an
empirical version of~\eqref{eq:diff} when plugging in the estimators and can check whether all
off-diagonal elements on the right hand side of~\eqref{eq:diff} are
small or vanishing. We list below results for the seven experimental
intervention conditions whose target is well described in
\cite{MooijHeskes_UAI_13}.  
The element on the right-hand side of~\eqref{eq:diff} with
the largest absolute value is selected. We use now the Gram instead of
the covariance matrix to be also sensitive to model-violations of the
additional assumption (C'), see Section~\ref{sec:Gram},
 though the results are almost identical
whether using the Gram or covariance matrix.  These large off-diagonal elements indicate a violated mechanism in the
sense that the model~\eqref{eq:modelinterv} does not fit very well,
because either  the interventions have not been of the assumed
shift-type or the causal mechanism in which the agent is involved has
changed under the intervention.  
\begin{center}
\begin{tabular}{llll}
Experiment & Reagent & Intervention & largest mechanism violation \\ \hline
3 &Akt-Inhibitor&inhibits AKT activity& PLCg $\leftrightarrow$ PKA\\
4 &G0076&inhibits {\bf PKC} activity& {\bf PKC} $\leftrightarrow$ PIP2
\\
5 &Psitectorigenin&inhibits {\bf PIP2} abundance& {\bf PIP2} $\leftrightarrow$ PKA\\
6 &U0126&inhibits {\bf MEK} activity& {\bf MEK} $\leftrightarrow$ PKA \\
7 &LY294002&changes PIP2/PIP3 mechanisms& PKA $\leftrightarrow$ JNK\\
8 &PMA&activates PKC activity& MEK $\leftrightarrow$ PKA\\
9 &$\beta$2CAMP&activates {\bf PKA} activity& {\bf PKA} $\leftrightarrow$ PKC\\ \hline
\end{tabular}
\end{center}
The table above lists the results for the seven experimental conditions
where we know the intervention mechanism, at least approximately. 
The results are interesting in that the most violated mechanism (the
largest entry in the off-diagonal matrix on the right-hand side of the empirical
version of~\eqref{eq:diff}) occurs
in 4 of the 7 experimental conditions directly at the intervention
target. In 3 of these 4 cases, the violated mechanism concerns a
relation that has a large entry in the estimated connectivity matrix.  This corresponds well with the model of activity
interventions in~\cite{MooijHeskes_UAI_13}. Note that we have not made use of the
intervention targets in the estimation procedure.  The interesting point is
that  we can use the model violations to estimate with some success where the
interventions occurred.

\section{Beyond covariances}\label{sec:Gram}
For the method above, we exploit differences in the covariance of
observations across different environments. We can also exploit a
shift in the mean of the intervention strength $\mb c$ (and consequently
in the observations $\x$) when strengthening the condition
(C) to (C'). Specifically, we  require for (C') that in each environment $j\in \J$ 
the shift in the mean $E(\mb c_j)$ equals zero for all variables
except at most one variable. The variable with a non-zero shift in the mean can change
from one environment to another. Note that the counterpart
of~\eqref{eq:Si}  when using the Gram matrix instead of the covariance
matrix reads
\begin{align}
(\mb I - \B) \G_{\x,j} (\mb I - \B)^T &= \Gcj + \Ge \label{eq:Gi} .
\end{align}
Under the stronger version (C'), the difference across
environments  of the
right-hand side in~\eqref{eq:Gi}  is again a diagonal matrix and we
can proceed just as above, by replacing the covariance matrices with
Gram matrices throughout. If the assumption (C') is satisfied, this
allows identifiability of the graph in a wider range of settings
(Theorem~\ref{theorem:1} can be adapted in a straightforward manner by
again replacing covariances with Gram matrices) but requires the
stricter condition (C'). Since in practice it is often unclear whether
the stricter condition is approximately true, we work mainly with the
weaker assumption (C) and exploit only shifts in the covariance matrices. 



\section{Additional figures} \label{sec:appFig5}
\begin{table}[ht]
\hspace{-0.0cm}
  \begin{tabular}
      {llllll} 
	& 
	 \hspace{1mm}     
     {\small \emph{Setting 1}} & 
     \hspace{-2mm}
     {\small \emph{Setting 2}} & 
     \hspace{-2mm}
     {\small \emph{Setting 3}} & 
     \hspace{-2mm}
     {\small \emph{Setting 4}} & 
     \hspace{-2mm}
     {\small \emph{Setting 5}} \\      
      
     & 
	 \hspace{1mm}     
     {\small $n = 1000$} & 
     \hspace{-2mm}
     {\small $n = 10000$} & 
     \hspace{-2mm}
     {\small $n = 10000$} & 
     \hspace{-2mm}
     {\small $n = 10000$} & 
     \hspace{-2mm}
     {\small $n = 10000$} \\

     & 
	 \hspace{1mm}
     {\small no hidden vars.} & 
     \hspace{-2mm}
     {\small no hidden vars.} & 
     \hspace{-2mm}
     {\small hidden vars.} & 
     \hspace{-2mm}
     {\small no hidden vars.} & 
     \hspace{-2mm}
     {\small no hidden vars.} \\
 
 	& 
 	\hspace{1mm}
 	{\small $\iMult = 1$} & 
 	 \hspace{-2mm}
     {\small $\iMult = 1$} & 
     \hspace{-2mm}
     {\small $\iMult = 1$} & 
     \hspace{-2mm}
     {\small $\iMult = 0$} & 
     \hspace{-2mm}
     {\small $\iMult = 0.5$} 
     \vspace{-0.65cm}\\

\rotatebox{90}{\ice} & 
\hspace{-3mm}
\subfloat{
  \raisebox{-.1\height}{\includegraphics[trim=85 85 50 02, clip, width=0.175\textwidth, keepaspectratio=true]{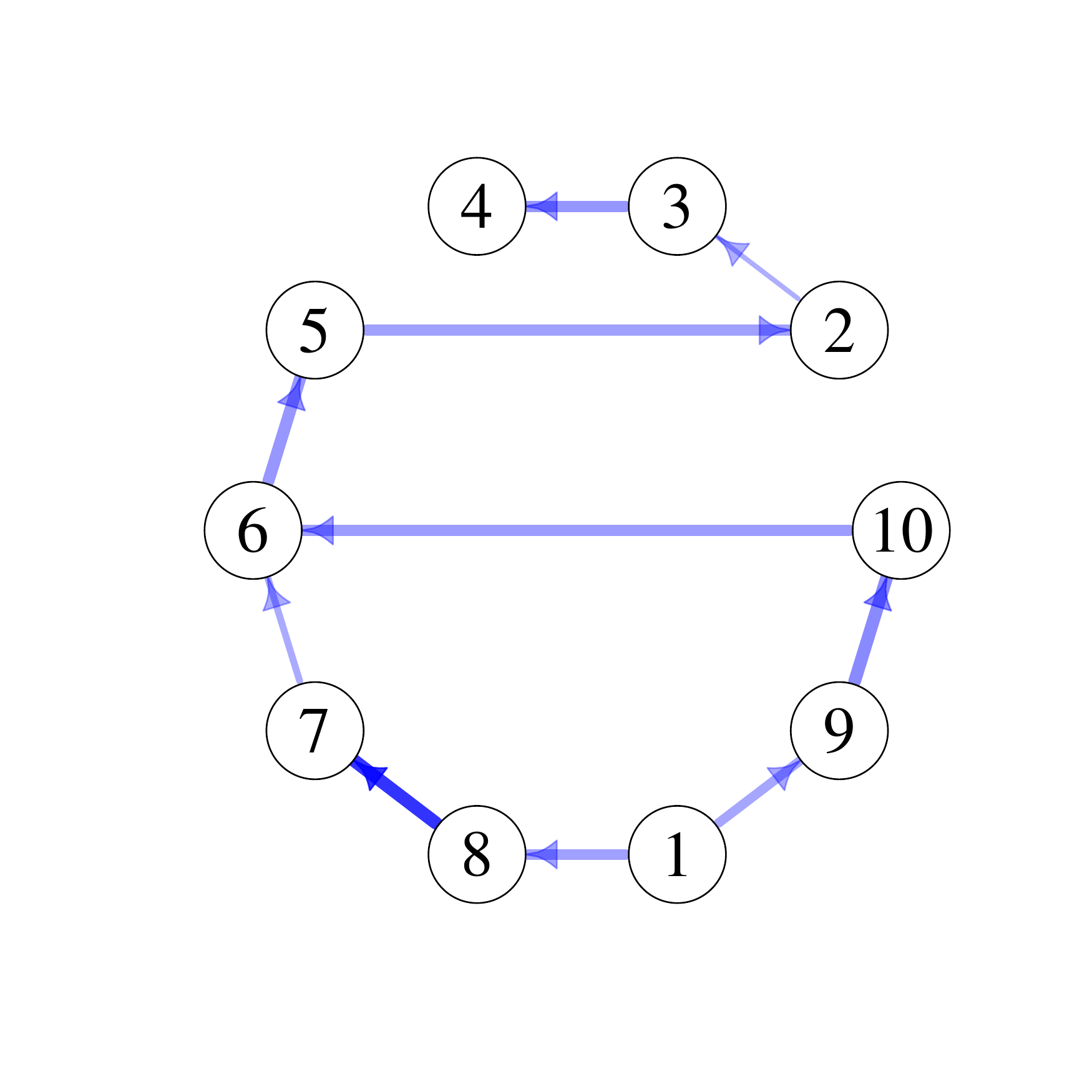}
}} & \hspace{-6mm}
\subfloat{
     \raisebox{-.1\height}{\includegraphics[trim=85 85 50 70, clip, width=0.175\textwidth, keepaspectratio=true]{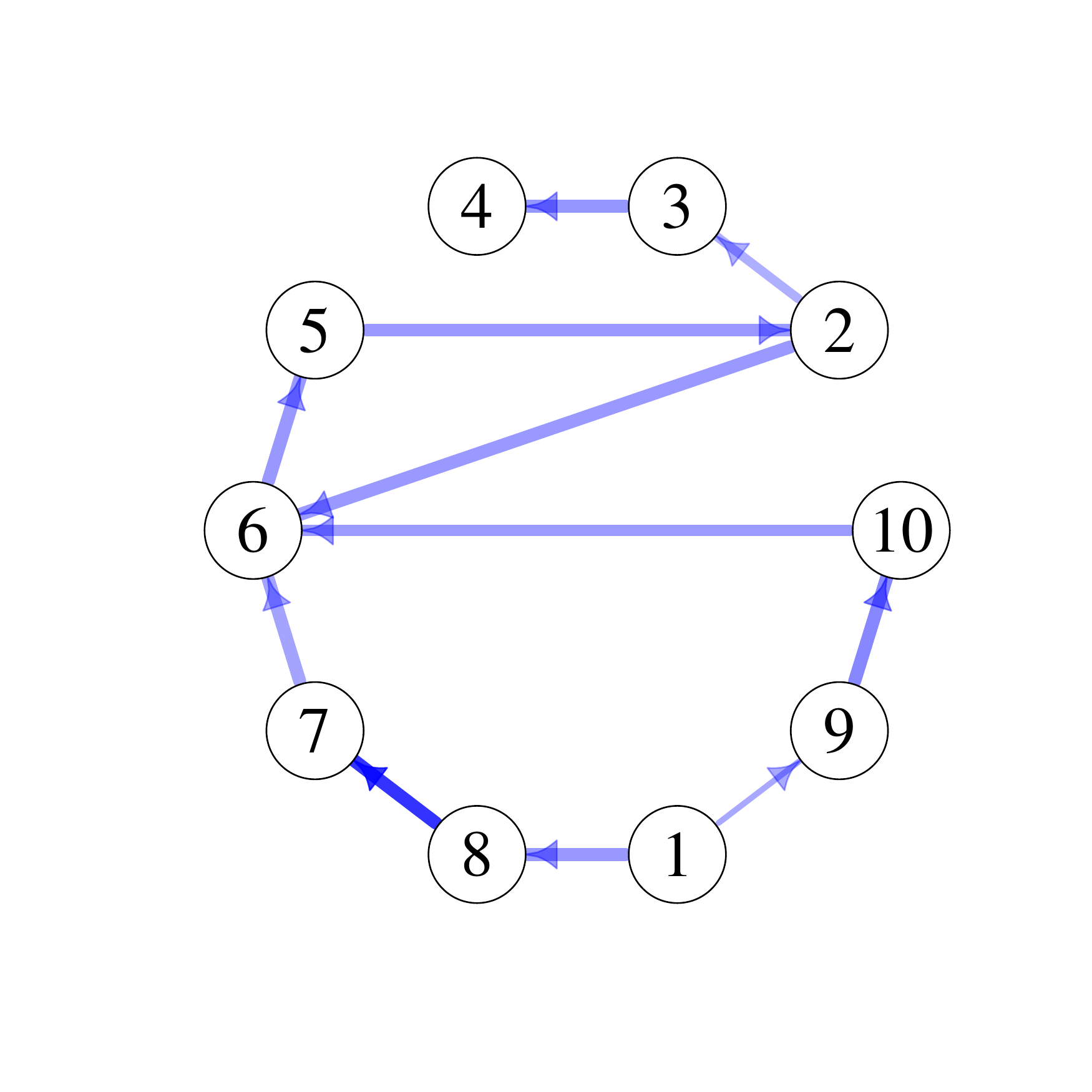}
}} & \hspace{-6mm}
\subfloat{
     \raisebox{-.1\height}{\includegraphics[trim=85 85 50 70, clip, width=0.175\textwidth, keepaspectratio=true]{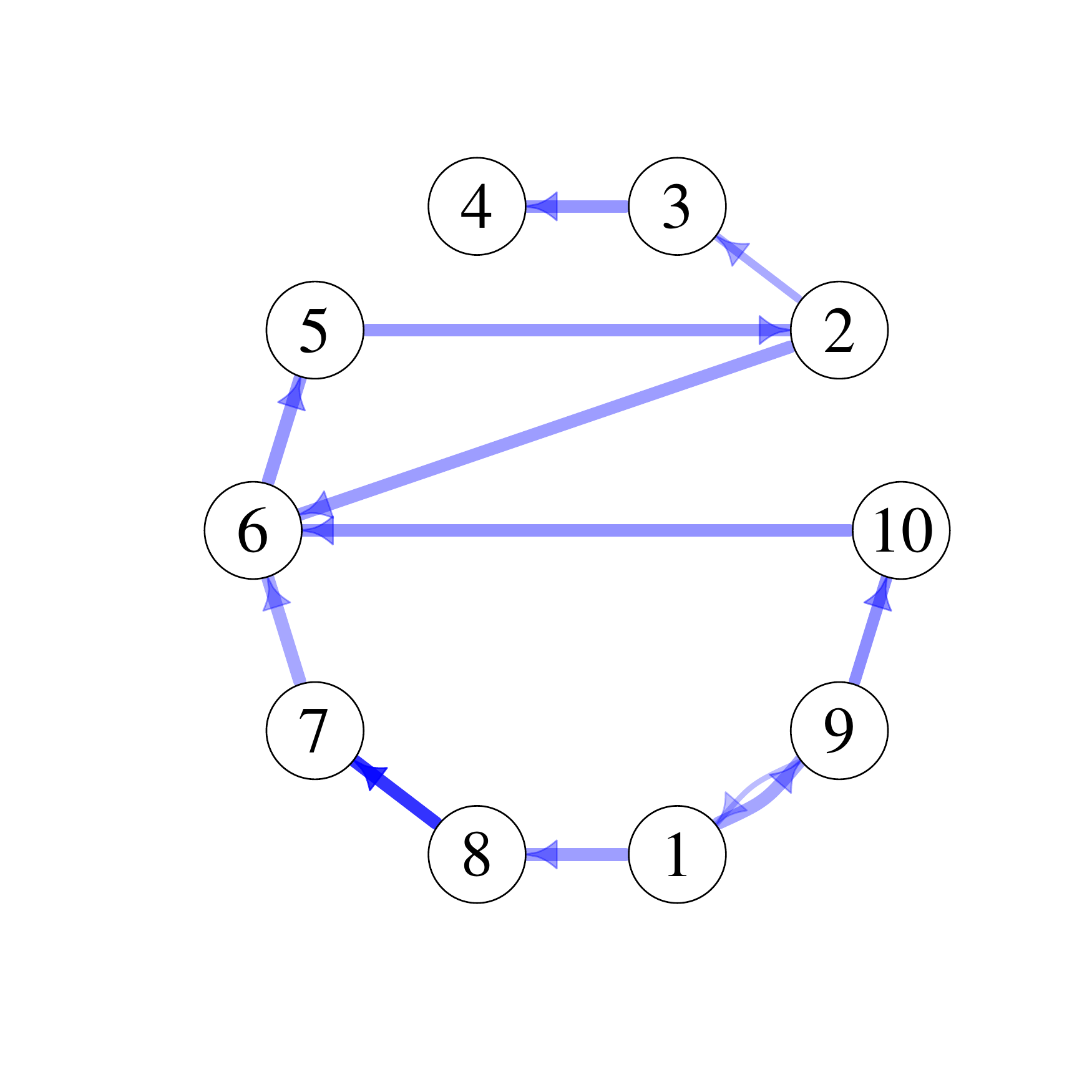}
}} & \hspace{-6mm}
\subfloat{
     \raisebox{-.1\height}{\includegraphics[trim=85 85 50 70, clip, width=0.175\textwidth, keepaspectratio=true]{figs/synthetic/new/im0/empty}
}} & \hspace{-6mm}
\subfloat{
    \raisebox{-.1\height}{\includegraphics[trim=85 85 50 70, clip, width=0.175\textwidth, keepaspectratio=true]{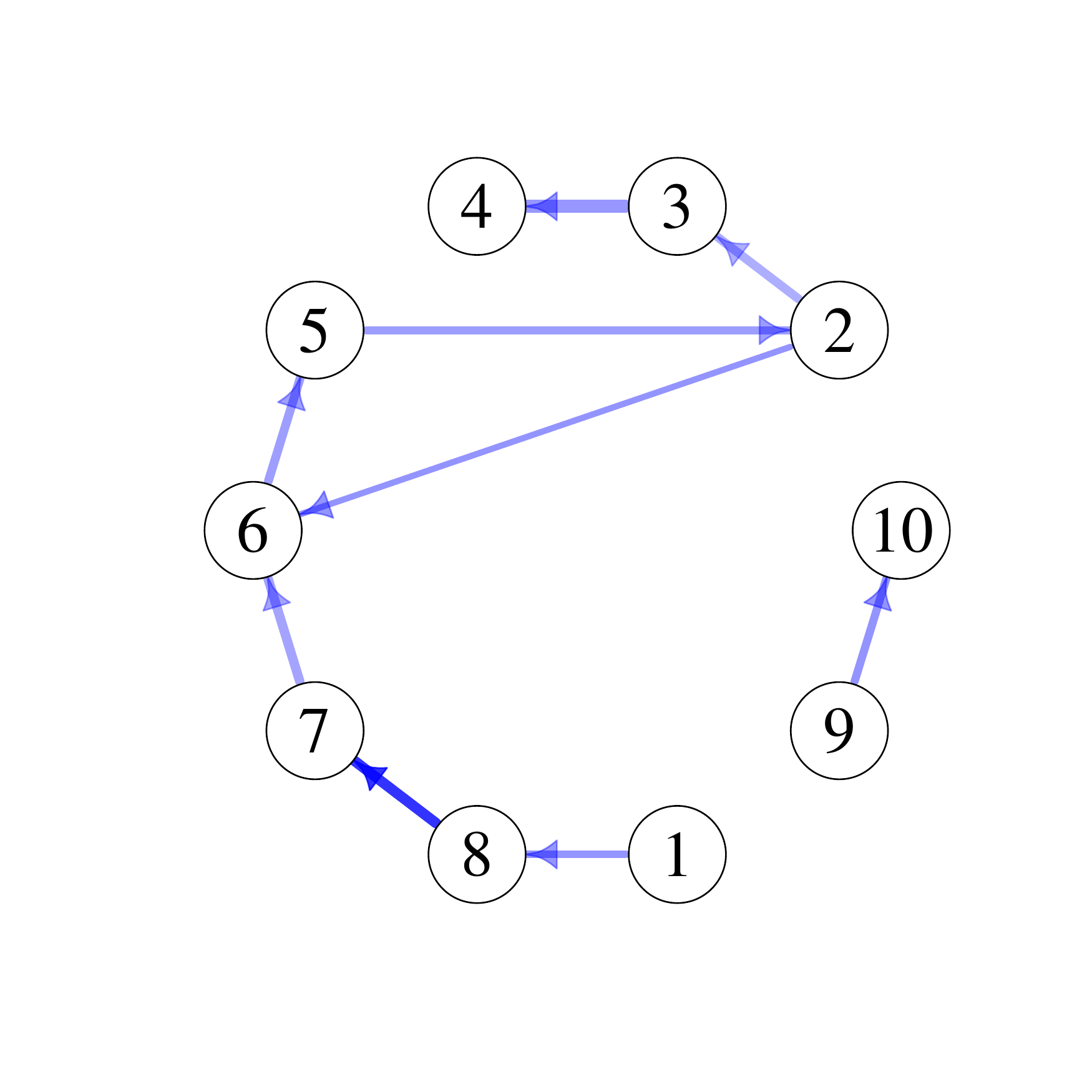}
}} \\
&\hspace{-2mm} \tiny $\bf \mathbf{SHD} = 2$ &\hspace{-2mm} \tiny$\bf \mathbf{SHD} = 1$&\hspace{-2mm} \tiny$\bf \mathbf{SHD} = 2$ &\hspace{-2mm} \tiny$\bf \mathbf{SHD} = 12$ &\hspace{-2mm} \tiny$\bf \mathbf{SHD} = 3$
 \vspace{-0.15cm} \\
  \end{tabular}
 \captionof{figure}{{\small Synthetic data. Stability selection results for \ice with parameters $\mathbb{E}(V) = 2$ and $\pi_{thr} = 0.75$. The intensity of the edges illustrates the relative magnitude of the estimated coefficients, the width shows how often an edge was selected. The edge from node 6 to node 10 is associated with the smallest coefficient in absolute value. It is retained in none of the settings in the stability selection procedure.}} \label{fig:sim_results_stabSel}
\end{table}

\end{document}